\tikzstyle{every node}=[circle,draw=black, inner sep=1.5pt,fill=white]
\newcommand{\ltdots}{..}
\title{Linear Time Construction of Indexable Founder Block Graphs}
\author{Veli M{\"a}kinen}{Department of Computer Science, University of Helsinki, Finland}{veli.makinen@helsinki.fi}{http://orcid.org/0000-0003-4454-1493}{}
\author{Bastien Cazaux}{Department of Computer Science, University of Helsinki, Finland}{}{https://orcid.org/0000-0002-1761-4354}{}
\author{Massimo Equi}{Department of Computer Science, University of Helsinki, Finland}{massimo.equi@helsinki.fi}{}{}
\author{Tuukka Norri}{Department of Computer Science, University of Helsinki, Finland}{tuukka.norri@helsinki.fi}{http://orcid.org/0000-0002-8276-0585}{}
\author{Alexandru I. Tomescu}{Department of Computer Science, University of Helsinki, Finland}{alexandru.tomescu@helsinki.fi}{https://orcid.org/0000-0002-5747-8350}{}
\authorrunning{V. M\"akinen \emph{et al}.}
\keywords{Pangenome indexing, founder reconstruction, multiple sequence alignment, compressed data structures, string matching}
\begin{document}

\maketitle

\begin{abstract}
We introduce a compact pangenome representation based on an optimal segmentation concept that aims to reconstruct \emph{founder sequences} from a multiple sequence alignment (MSA). Such founder sequences have the feature that each row of the MSA is a recombination of the founders. Several linear time dynamic programming algorithms have been previously devised to optimize segmentations that induce \emph{founder blocks} that then can be concatenated into a set of founder sequences. All possible concatenation orders can be expressed as a \emph{founder block graph}. We observe a key property of such graphs: if the node labels (founder segments) do not repeat in the paths of the graph, such graphs can be indexed for efficient string matching. We call such graphs \emph{segment repeat-free founder block graphs}. 

We give a linear time algorithm to construct a segment repeat-free founder block graph given an MSA. The algorithm combines techniques from the founder segmentation algorithms (Cazaux et al. SPIRE 2019) and \emph{fully-functional bidirectional Burrows-Wheeler index} (Belazzougui and Cunial, CPM 2019). We derive a succinct index structure to support queries of arbitrary length in the paths of the graph.

Experiments on an MSA of SAR-CoV-2 strains are reported. An MSA of size $410\times 29811$ is compacted in one minute into a segment repeat-free founder block graph of 3900 nodes and 4440 edges. The maximum length and total length of node labels is 12 and 34968, respectively. The index on the graph takes only $3\%$ of the size of the MSA. 
\end{abstract}

\section{Introduction}

Computational pangenomics \cite{marschall2016computational} ponders around the problem of expressing a reference genome of a species in a more meaningful way than as a string of symbols. The basic problem in such generalized representations is that one should still be able to support string matching type of operations on the content. 
Another problem is that any representation generalizing set of sequences also expresses sequences that may not be part of the real pangenome. That is, a good representation should have a feature to control over-expressiveness and simultaneously support efficient queries.

In this paper, we develop the theory around one promising pangenome representation candidate, the \emph{founder block graph}. This graph is a natural derivative of segmentation algorithms \cite{NCKM19,CKMN19} related to \emph{founder sequences} \cite{Ukkonen02}. 

Consider a set of individuals represented as lists of variations from a common reference genome. Such a set can be expressed as a \emph{variation graph} or as a \emph{multiple sequence alignment}. The former expresses reference as a backbone of an automaton, and adds a subpath for each variant. The latter inputs all variations of an individual to the reference, creating a row for each individual into a multiple alignment. Figure~\ref{fig:founder-alignment} shows an example of both structures with 6 very short genomes. 

A multiple alignment of much fewer \emph{founder} sequences can be used to approximate the input represented as a multiple alignment as well as possible, meaning that each original row can be mapped to the founder multiple alignment with a minimum amount of row changes (discontinuities). Finding an optimal set of founders is NP-hard \cite{RU07}, but one can solve relaxed problem statements in linear time \cite{NCKM19,CKMN19}, which are sufficient for our purposes. As an example on the usefulness of founders, Norri et al. \cite{NCKM19} showed that, on a large public dataset of haplotypes of human genome, the solution was able to replace 5009 haplotypes with only 130 founders so that the average distance between row jumps was over 9000 base pairs \cite{NCKM19}. This means that alignments of short reads (e.g. 100 bp) very rarely hit a discontinuity, and the space requirement drops from terabytes to just tens of gigabytes. Figure~\ref{fig:founder-alignment} shows such a solution on our toy example. 

\begin{figure}
    \centering
    \subcaptionbox{\label{fig:sub:sequences}}
    {\includegraphics[width=0.4\textwidth]{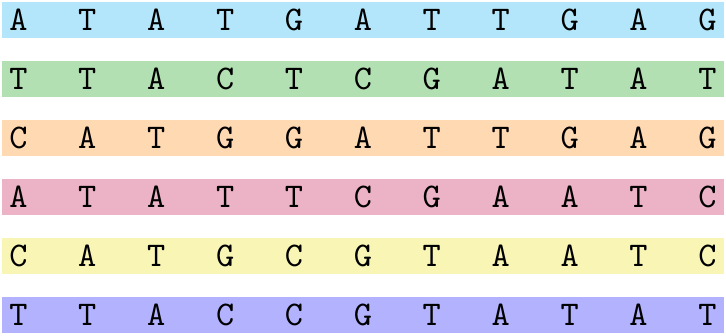}
    }
    \subcaptionbox{\label{fig:sub:founders}}
    {
    \raisebox{6mm}{\includegraphics[width=0.4\textwidth]{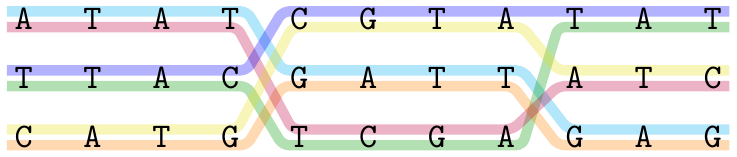}}
    }

    \subcaptionbox{\label{fig:sub:graph}}
    {
    \includegraphics[width=0.4\textwidth]{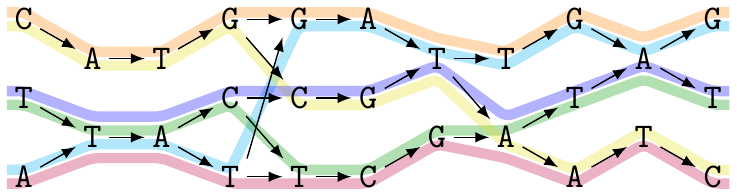}
    }
    \subcaptionbox{\label{fig:sub:founder_graph}}
    {
    \includegraphics[width=0.4\textwidth]{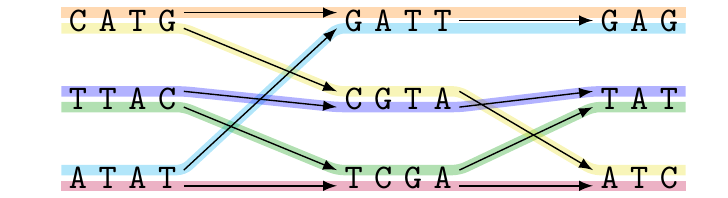}
    }
    \caption{(\protect\subref{fig:sub:sequences}) Input multiple alignment, (\protect\subref{fig:sub:founders}) a set of founders with common recombination positions as a solution to a relaxed version of founder reconstruction, (\protect\subref{fig:sub:graph}) a variation graph encoding the input and (\protect\subref{fig:sub:founder_graph}) a founder block graph.
    Here the input alignment and the resulting variation graph are unrealistically bad; the example is made to illustrate the founders.
    \label{fig:founder-alignment}}
\end{figure}

A \emph{block graph} is a labelled directed acyclic graph consisting of consecutive \emph{blocks}, where a block represents a set of sequences of the same length as parallel (unconnected) nodes. There are edges only from nodes of one block to the nodes of the next block. A \emph{founder block graph} is a block graph with blocks representing the segments of founder sequences corresponding to the optimal segmentation \cite{NCKM19}. Fig.~\ref{fig:founder-alignment} visualises such a founder block graph: There the founder set is divided into 3 blocks with the first, the second, and the third containing sequences of length 4, 4, and 3, respectively. The coloured connections between sequences in consecutive blocks define the edges. Such graphs interpreted as automata recognise the input sequences just like variation graphs, but otherwise recognise a much smaller subset of the language. With different optimisation criteria to compute the founder blocks, one can control the expressiveness of this pangenome representation. 

In this paper, we show that there is a natural subclass of founder block graphs that admit efficient index structures to be built that support exact string matching on the paths of such graphs. Moreover, we give a linear time algorithm to construct such founder block graphs from a given multiple alignment. The construction algorithm can also be adjusted to produce a subclass of \emph{elastic degeneralized strings} \cite{bernardini_et_al2019elastic}, which also support efficient indexing.  

The founder block graph definition given above only makes sense if we assume that our input multiple alignment is \emph{gapless}, meaning that the alignment is simply produced by putting strings of equal length under each other, like in Figure~\ref{fig:founder-alignment}. We develop the theory around founder block graphs under gapless multiple alignments. However, most of the results can be easily extended to handle gaps properly.

We start in Sect.~\ref{sect:related} by putting the work into the context of related work. In Sect.~\ref{sect:definitions} we introduce the basic notions and tools. In Sect.~\ref{sect:repeat-freeness} we study the property of founder block graphs that enable indexing. In Sect.~\ref{sect:construction} we give the linear time construction algorithm. In Sect.~\ref{sect:compressedindexing} we develop a succinct index structure that supports exact string matching in linear time. 
In Sect.~\ref{sect:gaps} we consider the general case of having gap symbols in multiple alignment. We report some preliminary experiments in Sect.~\ref{sect:experiments} on the construction and indexing of founder block graphs for a collection of SARS-CoV-2 strains. We consider future directions in Sect.~\ref{sect:discussion}. 

\section{Related work \label{sect:related}}

Indexing directed acyclic graphs (DAGs) for exact string matching on its paths was first studied by Sir\'en et al. in WABI 2011 \cite{SVM14}. A generalization of Burrows-Wheeler transform \cite{BW94} was proposed that supported near-linear time queries. However, the proposed transformation can grow exponential size in the worst case. Many practical solutions have been proposed since then, that either limit the search to short queries or use more time on queries \cite{thachuk2012indexing,HPB13,Iqp16,Siren18,Garetal18}. More recently, such approaches have been captured by the theory on \emph{Wheeler graphs} \cite{GMS17,GT19,Alaetal20}. 

Since it is NP-hard to recognize if a given graph is Wheeler \cite{GT19}, it is of interest to look for other graph classes that could provide some indexability functionality. Unfortunately, quite simple graphs turn out to be hard to index \cite{EGMT19,EMT20} (under the Strong Exponential Time Hypothesis). In fact, the reductions by Equi et al.~\cite{EGMT19,EMT20} can be adjusted to show that block graphs cannot be indexed in polynomial time to support fast string matching. But as we will later see, further restrictions on block graphs change the situation.   

Block graphs have also tight connection to \emph{generalized degenerate} (GD) strings and their \emph{elastic} version. These can also be seen as DAGs with a very specific structure. Matching a GD string is computationally easier and even linear time online algorithms can be achieved to compare two such strings, as analyzed by Alzamel et al.~\cite{alzamel_et_al18degenerate}. The elastic counterpart requires more care, as studied by Bernardini et al.~\cite{bernardini_et_al2019elastic}. Our results on founder block graphs can be casted on GD strings and elastic strings, as we will show later.

Finally, our indexing solution has connections to succinct representations of \emph{de Bruijn graphs} \cite{BOSS12,Bouetal15,Beletal18}. Compared to de Bruijn graphs that are cyclic and have limited memory ($k$-mer length), our solution retains the linear structure of the block graph. 

\section{Definitions and basic tools\label{sect:definitions}}

\subsection{Strings}

We denote integer intervals by $[i..j]$. Let $\Sigma = \{1, \ldots, \sigma\}$ be an alphabet of size $|\Sigma| = \sigma$.
A \emph{string} $T[1..n]$ is a sequence of symbols from $\Sigma$, i.e. $T\in \Sigma^n$, where 
$\Sigma^n$ denotes the set of strings of length $n$ under the alphabet $\Sigma$.
A \emph{suffix} of string $T[1..n]$ is $T[i..n]$ for $1\leq i\leq n$. 
A \emph{prefix} of string $T[1..n]$ is $T[1..i]$ for $1\leq i\leq n$. 
A \emph{substring} of string $T[1..n]$ is $T[i..j]$ for $1\leq i\leq j \leq n$. 
Substring $T[i..j]$ where $j<i$ is defined as the \emph{empty string}.

The \emph{lexicographic order} of two strings $A$ and $B$ is naturally defined by the order of the alphabet: $A<B$ iff $A[1..i]=B[1..i]$ and $A[i+1]<B[i+1]$ for some $i\geq 0$. If $i+1>\min(|A|,|B|)$, then the shorter one is regarded as smaller. However, we usually avoid this implicit comparison by adding \emph{end marker} $\mathbf{0}$ to the strings.

Concatenation of strings $A$ and $B$ is denoted $AB$.

\subsection{Founder block graphs}

As mentioned in the introduction, our goal is to compactly represent a \emph{gapless} multiple sequence alignment (MSA) using a founder block graph. In this section we formalize these concepts.

A \emph{gapless multiple sequence alignment} $\mathtt{MSA}[1 \ltdots m,\, 1 \ltdots n]$ is a set of $m$ strings drawn from $\Sigma$, each of length $n$. Intuitively, it can be thought of as a matrix in which each row is one of the $m$ strings. 
Such a structure can be partitioned into what we call a \emph{segmentation}, that is, a collection of sets of shorter strings that can represent the original alignment.

\begin{definition}[Segmentation]
Let $\mathtt{MSA}[1 \ltdots m,\, 1 \ltdots n]$ be a gapless multiple alignment and let $R_1, R_2, \ldots, R_m$ be the strings in MSA. A \emph{segmentation} $S$ of MSA is a set of $b$ sets of strings $S^1, S^2, \ldots, S^b$ such that for each $1 \leq i \leq b$ there exist $j^{(i)}$ and $k^{(i)}$ such that for each $s \in S^i$, $s = R_t[j^{(i)}, k^{(i)}]$ for some $1 \leq t \leq m$. Furthermore, it holds $j^{(1)}= 1$, $k^{(b)} = m$, and $k^{(i)} = j^{(i+1)} - 1$ for all $1\leq i <b$, so that  $S^1, S^2, \ldots, S^b$ \emph{covers} the MSA. We call $w(S^i) = k^{(i)} - j^{(i)} +1$ the width of $S^i$. 
\end{definition}

A segmentation of a MSA can naturally lead to the construction of a founder block graph. Let us first introduce the definition of a block graph.

\begin{definition}[Block Graph]
A \emph{block graph} is a graph $G=(V,E,\ell)$ where $\ell: V \rightarrow \Sigma^+$ is a function that assigns a string label to every node and for which the following properties hold.
\begin{enumerate}
	\item $\{V^1, V^2, \ldots, V^b\}$ is a partition of $V$, that is, $V = V^1 \cup V^2 \cup \ldots \cup V^b$ and $V^i \cap V^j = \emptyset$ for all $i\neq j$;
	\item if $(v,w) \in E$ then $v \in V^i$ and $w \in V^{i+1}$ for some $1 \leq i \leq b-1$;
	\item if $v,w \in V^i$ then $|\ell(v)| = |\ell(w)|$ for each $1 \leq i \leq b$ and if $v\neq w$, $l(v) \neq l(w)$.
\end{enumerate}
\end{definition}
As a convention we call every $V^i$ a \emph{block} and every $\ell(v)$ a \emph{segment}.

Given a segmentation $S$ of a MSA, we can define the \emph{founder block graph} as a block graph \emph{induced} by $S$. The idea is to have a graph in which the nodes represents the strings in $S$ while the edges retain the information of how such strings can be recombined to spell any sequence in the original MSA.

\begin{definition}[Founder Block Graph]
A \emph{founder block graph} is a block graph $G(S) = (V,E,\ell)$ \emph{induced} by $S$ as follows: For each $1 \leq i \leq b$ we have $S^i = \{\ell(v) : v \in V^i\}$ and $(v,w) \in E$ if and only if there exists $i \in [1,b]$ and $t \in [1,m]$ such that $v \in V^i$, $w \in V^{i+1}$ and $R_{t}[j,j+|\ell(v)|+|\ell(w)|-1] = \ell(v)\ell(w)$ with  $j = 1+\sum_{h=1}^{i-1}w(S^h)$.
\end{definition}

We regard the edges of (founder) block graphs to be directed from left to right. Consider a path $P$ in $G(S)$ between any two nodes. The label $\ell(P)$ of $P$ is the concatenation of labels of the nodes in the path. Let $Q$ be a query string. We say that $Q$ \emph{occurs} in $G(S)$ if $Q$ is a substring of $\ell(P)$ for any path $P$ of $G(S)$.
Figure~\ref{fig:occurs} illustrates such queries.

\begin{figure}
    \centering
    \includegraphics{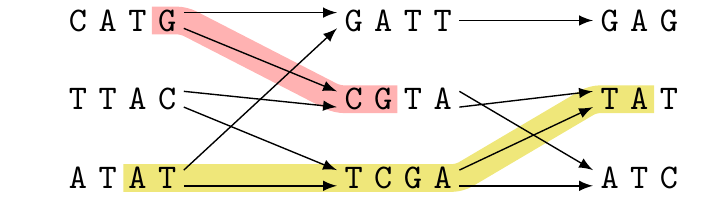}
    \caption{An example of two strings, \textcolor{red}{\texttt{GCG}} and \textcolor{yellow!70!blue}{\texttt{ATTCGATA}}, occurring in $G(S)$.}
    \label{fig:occurs}
\end{figure}

In our example in Figure~\ref{fig:founder-alignment}, segmentation $S$ would be $[1..4],[5..8],[9..11]$, and the induced founder block graph has thus 3 blocks with 9 nodes and 11 edges in total.

\subsection{Basic tools}

A \emph{trie} \cite{Bri59} of a set of strings is a rooted directed tree with outgoing edges of each node labeled by distinct characters such that there is a root to leaf path spelling each string in the set; shared part of the root to leaf paths to two different leaves spell the common prefix of the corresponding strings. 
Such a trie can be computed in $O(N \log \sigma)$ time, where $N$ is the total length of the strings, and it supports string queries that require $O(q \log \sigma)$ time, where $q$ is the length of the queried string. 

An \emph{Aho-Corasick automaton} \cite{AC75} is a trie of a set of strings with additional pointers (fail-links). While scanning a query string, these pointers (and some shortcut links on them) allow to identify all the positions in the query at which a match for any of the strings occurs. Construction of the automaton takes the same time as that of the trie. Queries take $O(q \log \sigma +\mathtt{occ})$ time, where $\mathtt{occ}$ is the number of matches.

A \emph{suffix array} \cite{MM93} of string $T$ is an array $\mathtt{SA}[1..n+1]$ such that $\mathtt{SA}[i]=j$ if $T'[j..n+1]$ is the $j$-th smallest suffix of string $T'=T\mathbf{0}$, where $T \in \{1,2,\ldots,\sigma\}$, and $\mathbf{0}$ is the end marker. Thus, $\mathtt{SA}[1]=n+1$. 

\emph{Burrows-Wheeler transform} BWT$[1..n+1]$ \cite{BW94} of string $T$ is such that BWT$[i]=T'[\mathtt{SA}[i]-1]$, where $T'=T\mathbf{0}$ and $T'[-1]$ is regarded as $T'[n+1]=\mathbf{0}$.

A \emph{bidirectional BWT index} \cite{SOG12, BCKM20} is a succinct index structure based on some auxiliary data structures on BWT. Given a string $T \in \Sigma^n$, with $\sigma\leq n$, such index occupying $O(n \log \sigma)$ bits of space can be build in $O(n)$ time and it supports finding in $O(q)$ time if a query string $Q[1..q]$ appears as substring of $T$ \cite{BCKM20}. Moreover, such query returns an interval pair  $([i..j]$,$[i'..j'])$ such that suffixes of $T$ starting at positions $\mathtt{SA}[i],\mathtt{SA}[i+1],\ldots, \mathtt{SA}[j]$ share a common prefix matching the query. Interval $[i'..j']$ is the corresponding interval in the suffix array of the reverse of $T$. Let $([i..j]$,$[i'..j'])$ be the interval corresponding to query substring $Q[l..r]$. A \emph{bidirectional backward step} updates the interval pair $([i..j]$,$[i'..j'])$ to the corresponding interval pair when the query substring $Q[l..r]$ is extended to the left into $Q[l-1..r]$ or to the right into $Q[l..r+1]$. This takes constant time \cite{BCKM20}.
A \emph{fully-functional bidirectional BWT index} \cite{BC19} expands the steps to allow contracting symbols from the left or from the right. That is, substring $Q[l..r]$ can be modified into $Q[l+1..r]$ or to $Q[l..r-1]$ and the the corresponding interval pair can be updated in constant time.

Among the auxiliary structures used in BWT-based indexes, we explicitly use the \emph{rank} and \emph{select} structures: String $B[1..n]$ from binary alphabet is called a \emph{bitvector}. Operation $\mathtt{rank}(B,i)$ returns the number of $1$s in $B[1..i]$. Operation $\mathtt{select}(B,j)$ returns the index $i$ containing the $j$-th $1$ in $B$.  Both queries can be answered in constant time using an index requiring $o(n)$ bits in addition to the bitvector itself \cite{Jac89}.

\section{Subclass of founder block graphs admitting indexing\label{sect:repeat-freeness}}

We now show that there exist a family of founder block graphs that admit a polynomial time constructable index structure supporting fast string matching. First, a trivial observation: the input multiple alignment is a founder block graph for the segmentation consisting of only one segment. Such founder block graph (set of sequences) can be indexed in linear time to support linear time string matching \cite{BCKM20}. Now, the question is, are there other segmentations that allow the resulting founder block graph to be indexed in polynomial time? We show that this is the case.

\begin{definition}
Founder block graph $G(S)$ is \emph{segment repeat-free} if each $\ell(v)$ for $v\in V$ occurs exactly once in $G(S)$.
\end{definition}

Our example graph (Fig.~\ref{fig:founder-alignment}) is not quite segment repeat-free, as \texttt{TAT} occurs also as substring of paths starting with \texttt{ATAT}. 

\begin{proposition}
Segment repeat-free founder block graphs can be indexed in polynomial time to support polynomial time string queries.
\label{prop:blockrepeatfree}
\end{proposition}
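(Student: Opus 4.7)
The plan is to build an index on the node labels and to exploit the segment repeat-free property so that any occurrence of a query string can be anchored uniquely at some interior node. First I would construct a generalized suffix tree $\mathcal{T}$ over the multiset of node labels $\{\ell(v) : v \in V\}$, separated by distinct end markers. Since the total length of node labels is bounded by $mn$, this takes polynomial time, and it supports the following in polynomial time for a pattern $P$: (i) enumerate nodes $v$ with $P$ a substring of $\ell(v)$; (ii) enumerate nodes with $P$ a prefix (respectively suffix) of $\ell(v)$; (iii) test whether $P = \ell(v)$ for some $v$. Alongside $\mathcal{T}$, I would store the adjacency structure of $G(S)$ so that edge existence can be tested in constant time.

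Given a query $Q[1..q]$, I would split the analysis by how many block boundaries it crosses. If $Q$ does not cross any boundary, it must be a substring of a single label $\ell(v)$, which is handled by query (i). If $Q$ crosses exactly one boundary, then $Q = Q_1 Q_2$ with $Q_1$ a proper suffix of some $\ell(v)$ and $Q_2$ a proper prefix of some $\ell(w)$ with $(v,w) \in E$; I would try each of the $q-1$ splits, using query (ii) to obtain the two candidate sets and then checking for an incident edge between them.

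The interesting case is when $Q$ crosses at least two boundaries. Then $Q$ completely contains the label $\ell(u)$ of some intermediate node $u$. By the segment repeat-free assumption, this occurrence of $\ell(u)$ determines the position in $G(S)$ uniquely, so the anchor $u$ is unambiguous. For each internal position at which a complete label might fit, I would use query (iii) to identify $u$, and then verify that the prefix of $Q$ before $\ell(u)$ spells a path ending at a predecessor of $u$, and that the suffix of $Q$ after $\ell(u)$ spells a path leaving a successor of $u$. The verification is performed by iteratively peeling off one block at a time using the single-boundary procedure; each peeled block is either a complete node label (pinned down uniquely by repeat-freeness) or a terminal partial block at the extremity of $Q$.

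The main obstacle I expect is controlling the branching during verification before the first complete label has been consumed on either side of $u$, i.e., bounding the number of candidate predecessors and successors whose labels merely overlap a partial prefix or suffix of $Q$. This is where the repeat-free hypothesis must be used carefully: once a full label has been absorbed in the walk, the next node is unique, so the verification collapses to a deterministic traversal; before that point, the candidate set at each boundary is bounded by the width of the adjacent block, which is polynomial in the MSA size. Summing over the $O(q)$ possible positions for $u$ and the $O(q)$ possible splits in the peripheral cases yields a polynomial overall query bound.
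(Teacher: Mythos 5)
Your overall architecture matches the paper's: treat queries spanning at most two nodes as a base case, and for longer queries use repeat-freeness to anchor on a node label fully contained in $Q$, then verify outwards block by block, where equal label lengths and distinctness of labels within a block make each full-width step deterministic. In that sense the proof idea is right, and everything you describe runs in time polynomial in $|Q|$ \emph{and} in the size of $G(S)$, so the proposition as literally worded is established.

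The caveat is that under that weak reading the proposition is vacuous: \emph{any} labelled DAG supports $O(|Q|\cdot|E|)$-time string matching by the standard dynamic programming, with no index and no repeat-free hypothesis. The content of the proposition (made explicit in Theorem~\ref{thm:blockrepeatfree}, which achieves $O(|Q|^2\log\sigma)$) is that after polynomial preprocessing the query time depends only on $|Q|$ and $\sigma$, not on the graph; this is what contrasts with the indexing lower bounds recalled in Section~\ref{sect:related}. Your algorithm does not achieve this: in the one-boundary case you intersect two candidate node sets, each of which can have size $\Theta(|V|)$ (repeat-freeness does not prevent many labels from sharing a common prefix or suffix), against the edge list, costing $\Omega(|E|)$ per split; and at the extremities of the verification you iterate over all neighbours of the current node. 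The paper's fix for both issues is to index not the individual labels but the edge concatenations $\{\ell(v)\ell(w) : (v,w)\in E\}$ (the set $\mathcal{P}'$, later realized as a BWT of the concatenation $C$): repeat-freeness guarantees that these short strings determine the lexicographic order of all suffixes of all path labels, so a single search in time depending only on $|Q|$ settles the at-most-two-node case, and the same structure supplies the tries $\mathcal{R}(v)$ and $\mathcal{F}(v)$ used for the outward walk. Concretely, you should build your generalized suffix tree over the edge concatenations rather than over $\{\ell(v) : v\in V\}$, and use a dictionary-matching structure on $\{\ell(v)\}$ only to locate anchors, of which there are at most $|Q|$ by distinctness of node labels.
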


To prove the proposition, we construct such an index and show how queries can be answered efficiently.

Let $P(v)$ be the set of all paths starting from node $v$ and ending in a sink node. Let $P(v,i)$ be the set of \emph{suffix path labels} $\{\ell(L)[i..]\mid L \in P(v)\}$ for $1\leq i \leq |\ell(v)|$. Consider sorting $\mathcal{P}=\cup_{v \in V,1\leq i\leq |\ell(v)|} P(v,i)$ in lexicographic order. Then one can binary search any query string $Q$ in $\mathcal{P}$ to find out if it occurs in $G(S)$ or not. The problem with this approach is that $\mathcal{P}$ is of exponential size. 

However, if we know that $G(S)$ is segment repeat-free, we know that the lexicographic order of $\ell(L)[i..]$, $L \in P(v)$, is fully determined by the prefix $\ell(v)[i..|\ell(v)|]\ell(w)$ of $\ell(L)[i..]$, where $w$ is the node following $v$ on the path $L$. Let $P'(v,i)$ denote the set of suffix path labels cut at this manner. Now the corresponding set $\mathcal{P}'=\cup_{v \in V,1\leq i\leq |\ell(v)|} P'(v,i)$ is no longer of exponential size. Consider again binary searching a string $Q$ on sorted $\mathcal{P}'$. If $Q$ occurs in $\mathcal{P}'$ then it occurs in $G(S)$. If not, $Q$ has to have some $\ell(v)$ for $v\in V$ as its substring in order to occur in $G(S)$. 

To figure out if $Q$ contains $\ell(v)$ for some $v\in V$ as its substring, we build an Aho-Corasick automaton \cite{AC75} for $\{\ell(v) \mid v \in V\}$. Scanning this automaton takes $O(|Q|\log \sigma)$ time and returns such $v \in V$ if it exist. 

To verify such potential match, we need several tries \cite{Bri59}. For each $v\in V$, we build tries $\mathcal{R}(v)$ and $\mathcal{F}(v)$ on the sets 
$\{\ell(u)^{-1} \mid (u,v)\in E\}$ and $\{\ell(w) \mid (v,w)\in E\}$, respectively, where $X^{-1}$ denotes the reverse $x_{|X|}x_{|X|-1}\cdots x_1$ of string $X=x_1x_2 \cdots x_{|X|}$.

Assume now we have located (using Aho-Corasick automaton) $v\in V$ with $\ell(v)$ such that $\ell(v)=Q[i..j]$, where $v$ is at the $k$-th block of $G(S)$. We continue searching $Q[1..i-1]$ from right to left in trie $\mathcal{R}(v)$. If we reach a leaf after scanning $Q[i'..i-1]$, we continue the search with $Q[1..i'-1]$ on trie $\mathcal{R}(v')$, where $v'\in V$ is the node at block $k-1$ of $G(S)$ corresponding to the leaf we reached in the trie. If the search succeeds after reading $Q[1]$ we have found a path in $G(S)$ spelling $Q[1..j]$. We repeat the analogous procedure with $Q[j..m]$ starting from trie $\mathcal{F}(v)$.
That is, we can verify a candidate occurrence of $Q$ in $G(S)$ in $O(|Q|\log \sigma)$ time, as the search in the tries takes $O(\log \sigma)$ time per step. Note however, that there could be several labels $\ell(v)$ occurring as substrings of $Q$, so we need to do the verification process for each one of them separately. There can be at most $|Q|$ such candidate occurrences, due to the distinctness of node labels in $G(S)$. In total, this search can take at most $O(|Q|^2\log \sigma)$ time.

We are now ready to specify a theorem that reformulates Proposition~\ref{prop:blockrepeatfree} in detailed form.

\begin{theorem}
Let $G=(V,E)$ be a segment repeat-free founder block graph with blocks $V^1, V^2, \ldots, V^b$ such that $V=V^1 \cup V^2 \cup \cdots \cup V^b$. We can preprocess an index structure for $G$ in $O((N+W|E|)\log \sigma)$ time, where $\{1,\ldots,\sigma\}$ is the alphabet for node labels, $W=\max_{v \in V} \ell(v)$, and $N=\sum_{v \in V} \ell(v)$. Given a query string $Q[1..q] \in \{1,\ldots,\sigma\}^q$, we can use the index structure to find out if $Q$ occurs in $G$. This query takes $O(|Q|^2 \log \sigma)$ time. 
\label{thm:blockrepeatfree}
\end{theorem}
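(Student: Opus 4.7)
The plan is to realize the construction sketched in the discussion of Proposition~\ref{prop:blockrepeatfree} in enough detail to meet both bounds, and to separate the argument into a preprocessing analysis, a two-phase query algorithm, and a correctness argument that leans on segment repeat-freeness at every step. For preprocessing I would build three structures: (i) an Aho--Corasick automaton $\mathcal{A}$ over the node labels $\{\ell(v) : v \in V\}$; (ii) for every $v \in V$, a reverse trie $\mathcal{R}(v)$ on $\{\ell(u)^{-1} : (u,v) \in E\}$ and a forward trie $\mathcal{F}(v)$ on $\{\ell(w) : (v,w) \in E\}$; and (iii) an index that decides whether $Q$ is a prefix of some element of $\mathcal{P}'$, which I would realize by a generalized suffix tree on $\{\ell(v)\ell(w) : (v,w) \in E\}$, extended with $\ell(v)$ for each sink $v$. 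Structure (i) costs $O(N\log\sigma)$. Each edge $(u,v)$ contributes at most $W$ characters to $\mathcal{R}(v)$ and at most $W$ to $\mathcal{F}(u)$, so all the tries together cost $O(W|E|\log\sigma)$. Structure (iii) is built over inputs of total length $O(N+W|E|)$, also within $O((N+W|E|)\log\sigma)$. Summing yields the claimed preprocessing bound.

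For a query $Q[1..q]$ I would proceed in two phases. Phase~1 queries (iii) in $O(q\log\sigma)$ time to test whether $Q$ is a prefix of some element of $\mathcal{P}'$; as argued in the paragraph preceding Proposition~\ref{prop:blockrepeatfree}, this catches exactly the occurrences of $Q$ that lie inside a single block or straddle at most one block boundary. If Phase~1 fails, any remaining occurrence must span at least three consecutive blocks and hence must contain some full node label $\ell(v)$ as a substring. Phase~2 therefore scans $Q$ with $\mathcal{A}$ in $O(q\log\sigma)$ time to produce every triple $(v,i,j)$ with $Q[i..j]=\ell(v)$, and for each such candidate I verify it by walking $Q[1..i-1]$ right-to-left through $\mathcal{R}(v)$, hopping to the predecessor's $\mathcal{R}$-trie whenever a leaf is reached, and symmetrically walking $Q[j+1..q]$ left-to-right through the $\mathcal{F}$-tries.

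Each consumed character in a trie costs $O(\log\sigma)$ and strictly advances along $Q$, so a single verification takes $O(q\log\sigma)$. The combinatorial observation that keeps the total query time to $O(q^2\log\sigma)$ is that segment repeat-freeness forces distinct node labels to be substring-incomparable: if $\ell(v)$ were a substring of $\ell(v')$ then $\ell(v)$ would appear at least twice in $G(S)$, contradicting the assumption. Hence at most one $\ell(v)$ can begin at each position of $Q$, and $\mathcal{A}$ reports at most $q$ candidates, each verified in $O(q\log\sigma)$.

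The main obstacle is the correctness of Phase~2's verification, in particular that chained trie walks correspond to genuine paths in $G(S)$ and that no occurrence is missed. For soundness, the key invariant is that after the left-walk has consumed $Q[i'..i-1]$, the string read traces the label of a path in $G(S)$ ending at $v$; repeat-freeness makes the trie-to-trie switch deterministic, since the leaf just reached identifies a unique predecessor node. For completeness, an occurrence missed by Phase~1 must span at least three consecutive blocks and hence contain some $\ell(v)$ in full, which is exactly what $\mathcal{A}$ reports. Spelling out this case split, together with the boundary handling for sources, sinks, and occurrences that exactly abut a block boundary, is where most of the formal work lies.
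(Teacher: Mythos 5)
Your proposal is correct and follows essentially the same route as the paper: an Aho--Corasick automaton on the node labels plus per-node reverse/forward tries $\mathcal{R}(v)$, $\mathcal{F}(v)$ for candidate verification, with an auxiliary substring index standing in for the binary search over $\mathcal{P}'$, and the $O(|Q|^2\log\sigma)$ bound obtained from the at most $|Q|$ candidates guaranteed by repeat-freeness. The only (immaterial) difference is that you realize the auxiliary index as a generalized suffix tree over $\{\ell(v)\ell(w) : (v,w)\in E\}$, whereas the paper builds a bidirectional BWT index over the concatenation of these edge labels; both constructions fit within the stated $O((N+W|E|)\log\sigma)$ preprocessing bound.
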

\begin{proof}
With preprocessing time $O(N\log \sigma)$ we can build the Aho-Corasick automaton \cite{AC75}. The tries can be built in $O(\log \sigma)(\sum_{v \in V} (\sum_{(u,v) \in E} |\ell(u)| + \sum_{(v,w) \in E} |\ell(w)|))= O(|E|W\log \sigma)$ time. The required queries on these structures take $O(|Q|^2 \log \sigma)$ time.

To avoid the costly binary search in sorted $\mathcal{P}'$, we instead construct the bidirectional BWT index \cite{BCKM20} for the concatenation $C=\prod_{i \in \{1,2,\ldots,b\}} \prod_{v \in V^i, (v,w) \in E} \ell(v)\ell(w)0$. Concatenation $C$ is thus a string of length $O(|E|W)$ from alphabet $\{\mathbf{0},1,2,\ldots, \sigma\}$. The bidirectional BWT index for $C$ can be constructed in $O(|C|)$ time, so that in $O(|Q|)$ time, one can find out if $Q$ occurs in $C$ \cite{BCKM20}. This query equals that of binary search in $\mathcal{P}'$.
\end{proof}

We remark that founder block graphs have a connection with \emph{generalized degenerate strings} (GD strings) \cite{Alz18}. In a GD string, sets of strings of equal length are placed one after the other to represent in a compact way a bigger set of strings. Such set contains all possible concatenations of those strings, which are obtained by scanning the individual sets from left to right and selecting one string from each set. The length of the strings in a specific set is called \emph{width}, and sum of all the width of all sets in a GD string is the \emph{total width}. Given two GD strings of the same total width it is possible to determine if the intersection of the sets of strings that they represent is non empty in linear time in the size of the GD strings \cite{Alz18}. Thus, the special case in which one of the two GD string is just a standard string can be seen also as a special case of a founder block graph in which every segment is fully connected with the next one and the length of the query string is equal to the maximal length of a path in the graph. 

We consider the question of indexing GD strings (fully connected block graphs) to search for queries $Q$ shorter than the total width. We can exploit the segment repeat-free property to yield such an index.

\begin{theorem}
Let $G=(V,E)$ be a \emph{segment repeat-free GD string} a.k.a. a fully connected segment repeat-free founder block graph with blocks $V^1, V^2, \ldots, V^b$ such that $V=V^1 \cup V^2 \cup \cdots \cup V^b$ and $(v,w)\in E$ for all $v\in V^i$ and $w \in V^{i+1}$, $1\leq i<b$. We can preprocess an index structure for $G$ in $O((N+W|E|)\log \sigma)$ time, where $\{1,\ldots,\sigma\}$ is the alphabet for node labels, $W=\max_{v \in V} \ell(v)$, and $N=\sum_{v \in V} \ell(v)$. Given a query string $Q[1..q] \in \{1,\ldots,\sigma\}^q$, we can use the index structure to find out if $Q$ occurs in $G$. This query takes $O(|Q| \log \sigma)$ time. 
\label{thm:blockrepeatfreeGDstring}
\end{theorem}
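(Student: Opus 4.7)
The plan is to follow the template of the proof of Theorem~\ref{thm:blockrepeatfree}, keeping the preprocessing essentially unchanged and sharpening only the query procedure by exploiting the two additional features of a segment repeat-free GD string: (i) every pair $(v,w)\in V^i\times V^{i+1}$ is an edge, and (ii) by segment repeat-freeness, any label $\ell(v)$ that appears as a substring of some path-label $\ell(P)$ in $G$ does so only at the unique position where $v$ sits on $P$.

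For preprocessing I would reuse exactly the structures of Theorem~\ref{thm:blockrepeatfree}: the bidirectional BWT index on $C=\prod_{i=1}^{b-1}\prod_{v\in V^i,\,w\in V^{i+1}}\ell(v)\ell(w)\mathbf{0}$, the Aho--Corasick automaton on $\{\ell(v)\mid v\in V\}$, and the backward/forward tries. Full connectivity collapses the per-node tries $\mathcal{R}(v),\mathcal{F}(v)$ into a single backward trie $\mathcal{R}^i$ and a single forward trie $\mathcal{F}^i$ per block, which keeps us inside the stated $O((N+W|E|)\log\sigma)$ budget.

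For a query $Q[1..q]$ I would first issue a bidirectional BWT search of $Q$ against $C$ in $O(q)$ time. By full connectivity, $Q$ being a substring of some $\ell(v)\ell(w)$ with $(v,w)\in E$ is equivalent to $Q$ occurring in $G$ along a path crossing at most one block boundary, so if this search succeeds we answer yes. Otherwise any occurrence of $Q$ in $G$ must span at least three blocks and therefore must contain some entire block-segment $\ell(v)$ as a substring. I then run the Aho--Corasick automaton on $Q$ in $O(q\log\sigma)$ time and take any one hit $\ell(v)=Q[i..j]$, recording its block index $k$. Property~(ii) forces the entire block-alignment of the putative occurrence: the block boundaries inside $Q$ must fall at positions entirely determined by $i$, $j$, and the segment widths of the surrounding blocks. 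Only one candidate placement survives, and I verify it by descending $\mathcal{F}^{k+1},\mathcal{F}^{k+2},\ldots$ on the characters of $Q$ to the right of position $j$, and symmetrically by descending $\mathcal{R}^{k-1},\mathcal{R}^{k-2},\ldots$ on the reversed prefix $Q[1..i-1]$, spending $O(\log\sigma)$ per character for a total of $O(q\log\sigma)$.

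The main obstacle is arguing that a \emph{single} anchor suffices, in contrast to the up-to-$|Q|$ anchors that must be checked in Theorem~\ref{thm:blockrepeatfree}. The argument relies on property~(ii): if $Q$ really occurs in $G$ along a path $P$ and $\ell(v')=Q[i'..j']$ is any Aho--Corasick hit, then $v'$ must lie on $P$ and the offset $i'$ is uniquely determined by the block structure of $P$ and the starting position of $Q$ inside $\ell(P)$; otherwise $\ell(v')$ would occur in $\ell(P)$ at two distinct positions, contradicting segment repeat-freeness. Hence every Aho--Corasick hit predicts the same alignment, so one verification is both necessary and sufficient. Removing the $O(|Q|)$ multiplier on verifications present in Theorem~\ref{thm:blockrepeatfree} yields the claimed $O(|Q|\log\sigma)$ query time.
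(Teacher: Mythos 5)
Your proof is correct, but your query procedure is genuinely different from the one in the paper. The paper's proof also keeps the BWT index on $C$ (for occurrences spanning at most two nodes) and the Aho--Corasick automaton, but it then discards the tries entirely: it collects \emph{all} (at most $|Q|$) Aho--Corasick hits, marks the corresponding blocks, and decides the answer by a purely combinatorial check that the marked blocks form a contiguous chain (two endpoints with one marked neighbour each, the rest with two), relying on full connectivity to guarantee that any such chain is realized by an actual path. You instead keep per-block tries $\mathcal{R}^i,\mathcal{F}^i$ and verify a \emph{single} anchor by walking block-by-block, justified by the observation that segment repeat-freeness forces every Aho--Corasick hit to predict the same alignment of $Q$ against the block structure, so one verification is both necessary and sufficient. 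Each route has something to offer: the paper's avoids tries altogether and so is slightly leaner as an index, while yours makes the positional consistency of the hits and the handling of the partial prefix $Q[1..i-1]$ and the partial suffix after the last full label explicit, which the paper's marking test leaves implicit. Your single-anchor lemma is also the more portable insight --- since the verification walk in the proof of Theorem~\ref{thm:blockrepeatfree} is deterministic once an anchor is fixed, the same observation would remove the $O(|Q|)$ factor over candidate anchors there as well, which the paper does not exploit. Both approaches land on the claimed $O(|Q|\log\sigma)$ query time and the same preprocessing bound.
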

\begin{proof}
Recall the index structure of Theorem~\ref{thm:blockrepeatfree}. for the case of GD strings, we can simplify it as follow.  

We keep the same BWT index structure and the Aho-Corasick automaton, but we do not need any tries. After finding at most $|Q|$ occurrences of substrings of $Q$ in the graph using the Aho-Corasick automaton on node labels, we mark the matching blocks accordingly. If 2 marked blocks have exactly one marked neighboring block and $|Q|-2$ blocks have 2 marked neighboring blocks, then we have found an occurrence, otherwise not.  
\end{proof}

Observe that $\max(N,W|E|)\leq mn$, where $m$ and $n$ are the number of rows and number of columns, respectively, in the  multiple sequence alignment from where the founder block graph is induced. That is, the index construction algorithms of the above theorems can be seen to be almost linear time in the (original) input size. We study succinct variants of these indexes in Sect.~\ref{sect:compressedindexing}, and also improve the construction and query times to linear as side product.

\section{Construction of segment repeat-free founder block graphs \label{sect:construction}}

We can adapt the dynamic programming segmentation algorithms for founders \cite{NCKM19,CKMN19} to produce segment repeat-free founder block graphs.

The idea is as follows. Let $S$ be a segmentation of $\mathtt{MSA}[1..m,1..n]$. We say $S$ is \emph{valid} if it induces a segment repeat-free founder block graph $G(S)=(V,E)$. We build such valid $S$ for prefixes of MSA from left to right, minimizing the maximum block length needed.

\subsection{Characterization lemma}

Given a segmentation $S$ and founder block graph $G(S)=(V,E)$ induced by $S$, we can ensure that it is valid by checking if, for all $v\in V$, $\ell(v)$ occurs in the rows of the MSA only in the interval of the block $V^i$, where $V^i$ is the block of $V$ such that $v \in V^i$.

\begin{lemma}[Characterization]
\label{lemma:charaterization}
Let $j^{(i)} = 1+\sum_{h=1}^{i-1}w(S^h)$. A segmentation $S$ is valid if and only if, for all $V^i \subseteq V$, $1 \leq t \leq m$ and $j \neq j^{(i)}$, if $v \in V^i$ then $R_t[j,j+|\ell(v)|-1] \neq \ell(v)$.
\end{lemma}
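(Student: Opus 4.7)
The plan is to prove the biconditional directly: the forward direction is a one-step translation, while the backward direction needs a case analysis on the number of blocks a hypothetical non-canonical occurrence in $G(S)$ spans. For $(\Rightarrow)$, I argue contrapositively: suppose the condition fails, so $R_t[j..j+|\ell(v)|-1] = \ell(v)$ for some $v \in V^i$ and some $j \neq j^{(i)}$. Every MSA row $R_t$ traces out a source-to-sink path $P_t$ in $G(S)$ with $\ell(P_t) = R_t$, because its block-wise pieces are vertices of the respective blocks and each consecutive pair realises an edge directly by the definition of $E$. Hence $\ell(v)$ is a substring of $\ell(P_t)$ at offset $j$, and it also appears canonically at offset $j^{(i)}$ inside any path through $v$; the two offsets differ, so these are two distinct occurrences and $G(S)$ is not segment repeat-free.

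For $(\Leftarrow)$, also contrapositively. Suppose $G(S)$ is not segment repeat-free, and fix a non-canonical occurrence of some $\ell(v)$ (with $v \in V^i$) at offset $j \neq j^{(i)}$ in the label of a path $P = u_1 u_2 \cdots u_b$. Write $j = j^{(p)} + \delta_p$ with $0 \leq \delta_p < w(S^p)$ and let the occurrence end inside block $q \geq p$, so it spans nodes $u_p, u_{p+1}, \ldots, u_q$. If $q = p$, the occurrence lies inside $\ell(u_p)$, and any row $R_t$ whose block-$p$ piece equals $\ell(u_p)$ already satisfies $R_t[j..j+|\ell(v)|-1] = \ell(v)$, a violation. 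If $q = p+1$, the edge $(u_p, u_{p+1})$ is realised (by definition of $E$) by some row $R_t$ with $R_t[j^{(p)}..k^{(p+1)}] = \ell(u_p)\ell(u_{p+1})$; restricting to columns $[j..j+|\ell(v)|-1]$ reproduces $\ell(v)$, again a violation. If $q \geq p+2$, I invoke the middle node $u_{p+1}$: pick any row $R_{t^*}$ with $R_{t^*}[j^{(i)}..k^{(i)}] = \ell(v)$ (which exists because $\ell(v) \in S^i$); inside this canonical copy the segment $\ell(u_{p+1})$ sits at MSA column $j' = j^{(i)} + w(S^p) - \delta_p$, and forcing $j' = j^{(p+1)} = j^{(p)} + w(S^p)$ would give $j^{(i)} = j^{(p)} + \delta_p = j$, contradicting $j \neq j^{(i)}$. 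Hence $R_{t^*}[j'..j'+|\ell(u_{p+1})|-1] = \ell(u_{p+1})$ with $j' \neq j^{(p+1)}$, violating the condition with witness $(u_{p+1}, t^*, j')$.

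The main obstacle is the case $q \geq p+2$. The natural attempt --- locate a single MSA row that realises the entire span $u_p u_{p+1} \cdots u_q$ and read $\ell(v)$ off it --- can fail, because different edges along the span may be realised only by different rows. The trick is to not insist on such a row at all, and instead to use the row that hosts $\ell(v)$ at its canonical position: the internal decomposition of $\ell(v)$ inherited from the path-side occurrence then exhibits an intermediate node label $\ell(u_{p+1})$ inside that row at a column that, by a short calculation, must be non-canonical for $u_{p+1}$.
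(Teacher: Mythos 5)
Your argument is correct, and it shares the skeleton of the paper's proof while finishing the hard case by a different, non-inductive route. The forward direction is the paper's: each row $R_t$ spells a source-to-sink path of $G(S)$, so a row occurrence of $\ell(v)$ at column $j\neq j^{(i)}$ is a second occurrence in the graph; one clause worth adding is why a spurious occurrence must start at some $j\neq j^{(i)}$ in the first place, namely that an occurrence starting at column $j^{(i)}$ on a path would force another node of block $i$ to carry the label $\ell(v)$, contradicting within-block label distinctness. For sufficiency, both you and the paper rely on the same two structural facts: a spurious occurrence spanning at most two blocks is, by the definition of $E$, a substring of some row at a non-canonical column (your cases $q=p$ and $q=p+1$, the paper's item (a)), and one spanning three or more blocks wholly contains the label of a middle node (your $u_{p+1}$, the paper's item (b)). You diverge in how the multi-block case is closed: the paper (items (c) and (d)) runs a descent on label length, passing from $\ell(v)$ to the strictly shorter middle label with its own spurious occurrence in $G$, iterating until the occurrence fits inside $\ell(x)\ell(y)$ for some edge and is therefore a row substring at a non-canonical column; you terminate in one step by planting $\ell(u_{p+1})$ inside the canonical row copy $R_{t^*}[j^{(i)}..k^{(i)}]=\ell(v)$ and computing that its column $j'=j^{(i)}+w(S^p)-\delta_p$ can equal $j^{(p+1)}$ only if $j=j^{(i)}$. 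Your version avoids the induction entirely and produces an explicit witness $(u_{p+1},t^*,j')$ for the violated condition, at the price of a small offset calculation; the paper's descent needs no arithmetic but leaves the final witness implicit. Both routes are sound.
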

\begin{proof}
To see that this is a necessary condition for the validity of $S$, notice that each row of MSA can be read through $G$, so if $\ell(v)$ occurs elsewhere than inside the block, then these extra occurrences make $S$ invalid. To see that this is a sufficient condition for the validity of $S$, we observe the following:
\begin{enumerate}[a)]
    \item For all $(v,w)\in E$, $\ell(v)\ell(w)$ is a substring of some row of the input MSA.
    \item Let $(x,u),(u,y)\in E$ be two edges such that $U=\ell(x)\ell(u)\ell(y)$ is not a substring of any row of input MSA. Then any substring of $U$ either occurs in some row of the input MSA or it includes $\ell(u)$ as its substring.
    \item If $\ell(v')$ for some $v'\in V$ does not occur as substring of some input MSA row, but it occurs as a substring of some path in $G$, then it includes $\ell(u)$ as its substring for some $u\in V$, where $|\ell(u)|<|\ell(v')|$. Such $\ell(u)$ occurs at least once outside the block $V^i$, where  $u\in V^i$. That is, if $\ell(v')$ makes $S$ invalid then also $\ell(u)$ makes it invalid. 
    \item Continuing case c) inductively, at some point there has to be $\ell(u')$ for some $u'\in V$ such that $\ell(u')$ is substring of $\ell(v)\ell(w)$ for some $(v,w) \in E$. By a), such $\ell(u')$ is also a substring of some row of the input MSA. That is, $v'$ of case c) cannot exist. 
\end{enumerate}
\end{proof}

\subsection{From characterization to a segmentation\label{sect:recurrence}}

Let $s(j')$ be the score of an optimal (minimum scoring) valid segmentation $S$ of prefix $\mathtt{MSA}[1..m,1..j']$, where the score is defined as $\max_{[a..b] \in S} b-a+1$.
We can compute
\begin{equation}
s(j)=\min_{j':1\leq j'\leq v(j)} \max(j-j',s(j')),
\label{eq:score}
\end{equation}
where $v(j)$ is the largest integer such that segment $\mathtt{MSA}[1..m,v(j)+1..j]$ is valid. The segment is valid iff each substring $\mathtt{MSA}[i,v(j)+1..j]$, for $1\leq i\leq m$, occurs as many times in $\mathtt{MSA}[1..m,v(j)+1..j]$ as in the whole MSA. If such $v(j)$ does not exist for some $j$, we set $v(j)=0$.

We can compute the score $s(n)$ of the optimal segmentation of MSA in $O(ns_{\mathtt{max}})$ time after preprocessing values $v(j)$ in $O(mns_{\mathtt{max}}\log \sigma)$ time, where $s_{\mathtt{max}}=\max_{j: v(j)>0} s(j)$. 
For the former, one can start comparing $\max(j-j',s(j'))$ from $j'=v(j)$ decreasing $j'$ by one, and then the value $j-j'$ grows bigger than $s(j')$ at latest after $s(j)-(j-v(j))\leq s(j)$ steps. For preprocessing, we build the bidirectional BWT index of the MSA in $O(mn)$ time \cite{BCKM20}. At column $j$, consider the trie containing the reverse of the rows of $M[1..m,1..j]$. Search the trie paths from the bidirectional BWT index until the number of leaves in each trie subtree equals the length of the corresponding BWT interval. Let $j'$ be the column closest to $j$ where this holds for all trie paths. Then one can set $v(j)=j'$. The $O(m(j-v(j))\log \sigma)$ time construction of the trie has to be repeated for each column. As $j-v(j)\leq s(j)$, the claimed preprocessing time follows.

\subsection{Faster preprocessing\label{sect:preprocessing}}

We can do the preprocessing in $O(mn)$ time.

\begin{theorem}
Given a multiple sequence alignment $\mathtt{MSA}[1 \ldots m, 1 \ldots n]$, values $v(j)$ for each $1 \leq j \leq n$ can be computed in $O(mn)$ time, where $v(j)$ is the largest integer such that segment $\mathtt{MSA}[1\ltdots m,v(j)+1\ltdots j]$ is valid.
\end{theorem}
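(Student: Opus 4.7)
The plan is to replace the trie-based preprocessing of Section~\ref{sect:recurrence} by $m$ independent, amortized linear-time sliding-window scans over a single fully-functional bidirectional BWT index built on the rows of the MSA. First, I would reformulate $v(j)$ row-by-row. Call a segment $[a+1\ltdots j]$ \emph{valid for row $i$} if every occurrence of $\mathtt{MSA}[i,a+1\ltdots j]$ as a substring of the MSA starts at column $a+1$. By Lemma~\ref{lemma:charaterization}, the segment is valid exactly when it is valid for every row, so letting $g_i(j)$ denote the largest $a$ for which $[a+1\ltdots j]$ is valid for row $i$, we have $v(j) = \min_{1\le i\le m} g_i(j)$. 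The key structural property is that $g_i$ is non-decreasing in $j$: if $\mathtt{MSA}[i, a+1\ltdots j]$ has all of its occurrences at column $a+1$, then so does any right-extension $\mathtt{MSA}[i, a+1\ltdots j']$ with $j'>j$, since every occurrence of the longer string induces an occurrence of the shorter prefix, which is already anchored at column $a+1$. This monotonicity makes each $g_i$ computable by a single left-to-right sliding window.

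To run the window in constant time per step, I would build two static data structures in $O(mn)$ total time. The first is the fully-functional bidirectional BWT index of~\cite{BC19} on the concatenation $C = \mathtt{MSA}[1,1\ltdots n]\#\mathtt{MSA}[2,1\ltdots n]\#\cdots \#\mathtt{MSA}[m,1\ltdots n]\mathbf{0}$, which supports right-extend and left-contract of a bidirectional interval in $O(1)$. The second is a range-minimum/maximum structure on the array $\mathrm{ColSA}[k] := c(\mathtt{SA}[k])$, where $c(p)$ denotes the MSA column of the character at position $p$ of $C$ and separator positions carry a sentinel. Because the separators prevent substring occurrences from crossing row boundaries, every occurrence of $\mathtt{MSA}[i, a+1\ltdots j]$ in $C$ lies inside a single row and hence has a well-defined MSA starting column, so the validity-for-row-$i$ test reduces to checking whether $\min_{l\le k\le r}\mathrm{ColSA}[k] = \max_{l\le k\le r}\mathrm{ColSA}[k]$ for the current bidirectional interval $[l\ltdots r]$. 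Since the anchor occurrence at $(i,a+1)$ always lies in the interval, agreement of min and max forces this common column to be $a+1$, yielding a constant-time test.

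With these tools in place, for each row $i$ I would scan $j = 1, 2, \ldots, n$ maintaining a left pointer $a_i$ together with the bidirectional interval of $\mathtt{MSA}[i, a_i+1\ltdots j]$: at each step perform one right-extend with $\mathtt{MSA}[i,j]$, then repeatedly attempt a tentative left-contract and, using the range-min/max test, commit it iff the shorter substring is still valid for row $i$. Monotonicity of $g_i$ guarantees that $a_i$ is non-decreasing throughout the scan, so each row performs at most $n$ extensions and at most $n$ contractions, giving $O(n)$ amortized time per row and $O(mn)$ in total; a final linear pass returns $v(j) = \min_i g_i(j)$. The part I expect to need the most care is arguing that the constant-column condition on the BWT interval is genuinely equivalent to the per-row validity characterization (not merely sufficient): this relies crucially on the anchor occurrence $(i,a+1)$ lying inside the interval and on the separators in $C$ ruling out any occurrence that would blur the notion of an MSA column.
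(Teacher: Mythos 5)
Your proposal is correct and shares the paper's overall skeleton---build a fully-functional bidirectional BWT index over the concatenated rows, run a sliding window per row exploiting that the per-row left bound is non-decreasing in $j$, and combine rows by taking a minimum---but it resolves the one genuinely delicate step by a different mechanism. That step is deciding in $O(1)$ whether every suffix in the current BWT interval originates from the anchor column. The paper does this dynamically: it runs all $m$ row-windows in lockstep over a common column range and synchronizes them with an auxiliary backward scan that marks, in a bitvector $M$, exactly the BWT positions of suffixes starting at the current left column; validity is then the local condition that the freshly marked positions tile the active intervals (tracked with two further bitvectors $B$ and $E$), which is sound because equal-length block strings cannot produce properly nested BWT intervals. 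You instead make the test static: a range-minimum/range-maximum structure over the array recording the MSA column of each suffix-array position, queried on the current bidirectional interval; since the anchor occurrence of row $i$ lies in that interval, $\min=\max$ is equivalent to validity for that row by Lemma~\ref{lemma:charaterization}. Your test is easier to justify and fully decouples the rows (each keeps its own left pointer, no synchronization), at the price of the $O(mn\log n)$-bit column array where the paper uses only $O(mn)$ bits of bitvectors; both achieve the claimed $O(mn)$ time. Two details to tighten: the reversed scan direction is immaterial since the index of \cite{BC19} supports extension and contraction at both ends, but your loop only tests tentative contractions, so you must also test the uncontracted window $[1\ltdots j]$ and report $g_i(j)$ as undefined (the paper's $v(j)=0$ convention) when even that fails; and you should note explicitly that the query pattern contains no separator symbol, so separator suffixes never enter the queried interval and the sentinel values are never read.
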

\begin{proof}
Let us build the bidirectional BWT index \cite{BCKM20} of MSA rows concatenated into one long string. We will run several algorithms in synchronization over this BWT index, but we explain them first as if they would be run independently.

Algorithm 1 searches in parallel all rows from right to left advancing each by one position at a time. Let $k$ be the number parallel of steps done so far. It follows that we can maintain a bitvector $M$ that at $k$-th step stores $M[i]=1$ iff $BWT[i]$ is the $k$-th last symbol of some row.  

Algorithm 2 uses the \emph{variable length sliding window} approach of Belazzougui and Cunial \cite{BC19} to compute values $v(j)$. Let the first row of MSA be $T[1\ltdots n]$. Search $T[1\ltdots n]$ backwards in the fully-functional bidirectional BWT index \cite{BC19}. Stop the search at $T[j'+1\ltdots n]$ such that the corresponding BWT interval $[i'\ltdots i]$ contains only suffixes originating from column $j'+1$ of the MSA, that is, spelling $\mathtt{MSA}[a,j'+1\ltdots n]...$ in the concatenation, for some rows $a$. Set $v^1(n)=j'$. Contract $T[n]$ from the search string and modify BWT interval accordingly \cite{BC19}. Continue the search to find $T[j'+1\ltdots n-1]$ s.t. again the corresponding BWT interval $[i'\ltdots i]$ contains only suffixes originating from column $j'$. Update $v^1(n-1)=j'$. Continue like this throughout $T$. Repeat the process for all rows, to compute $v^2(j),v^3(j),\ldots,v^m(j)$ for all $j$. 
Set $v(j)=\min_i v^i(j)$ for all $j$. 

Let us call the instances of the Algorithm 2 run on the rest of the rows as Algorithms $3, 4, \ldots, m+1$. 

Let the current BWT interval in Algorithms $2$ to $m+1$ be $[j'+1\ltdots j]$. The problematic part in them is checking if the corresponding \emph{active} BWT intervals $[i'_a\ltdots i_a]$ for Algorithms $a \in \{2,3,\ldots, m+1\}$ contain only suffixes originating from column $j'+1$. To solve this, we run Algorithm 1 as well as Algorithms $2$ to $m+1$ in synchronization so that we are at the $k$-th step in Algorithm 1 when we are processing interval $[j'+1\ltdots j]$ in rest of the algorithms, for $k=n-j'$. In addition, we maintain bitvectors $B$ and $E$ such that $B[i'_a]=1$ and $E[i_a]=1$ for $a \in \{2,3,\ldots, m+1\}$. For each $M[i]$ that we set to 1 at step $k$ with $B[i]=0$ and $E[i]=0$, we check if $M[i-1]=1$ and $M[i+1]=1$. If and only if this check fails on any $i$, there is a suffix starting outside column $j'+1$. This follows from the fact that each suffix starting at column $j'+1$ must be contained in exactly one of the distinct intervals of the set $I=\{[i'_a\ltdots i_a]\}_{a\in 2,3 \ldots m+1}$. This is because $I$ cannot contain nested interval pairs as all strings in segment $[j'+1\ltdots j]$ of MSA are equal length, and thus their BWT intervals cannot overlap except if the intervals are exactly the same.  

Finally, the running time is $O(mn)$, since each extend-left and contract-right operations take constant time \cite{BC19}, and since the bitvectors are manipulated locally only on indexes that are maintained as variables during the execution.
\end{proof}

\subsection{Faster main algorithm}

Recall Eq.~(\ref{eq:score}). Before proceeding to the involved optimal solution, we give some insights by first improving the running time to logarithmic per entry. 

As it holds $v(1)\leq v(2) \leq \cdots \leq v(n)$, the range where the minimum is taken grows as $j$ grows.
Now, $[j'..j'+s(j')]$ can be seen as the \emph{effect range} of $s(j')$: for columns $j>j'+s(j')$ the maximum from the options is $j-j'$. Consider maintaining (key, value) pairs $(s(j')+j',s(j'))$ in a binary search tree (BST). When computing $s(j)$ we should have pairs $(s(j')+j',s(j'))$ for $1\leq j'\leq v(j)$ in BST. Value $s(j)$ can be computed by taking range minimum on BST values with keys in range $[j..\infty]$. Such query is easy to solve in $O(\log n)$ time. If there is nothing in the interval, $s(j)=j-v(j)$. Since this is semi-open interval on keys in range $[1 \ldots 2n]$, BST can be replaced by van Emde Boas tree to obtain $O(n\log \log n)$ time computation of all values \cite{GBT84}. Alternatively, we can remove elements from the BST once they no longer can be answers to queries, and we can get $O(n\log s_{\mathtt{max}})$ solution. To obtain better running time, we need to exploit more structural properties of the recurrence.

Cazaux et al. \cite{CKMN19} considered a similar recurrence and gave a linear time solution for it. In what follows we modify that technique to work with valid ranges.

For $j$ between $1$ and $n$, we define
\[x(j) = \max Argmin_{j' \in [1\ltdots v(j)]} \max(j-j',s(j'))\]

\begin{lemma}
    For any $j\in [1\ltdots n-1]$, we have $x(j) \leq x(j+1)$.
\end{lemma}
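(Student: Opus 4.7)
The plan is a proof by contradiction. Suppose $x(j+1) < x(j)$ and set $a := x(j)$, $b := x(j+1)$, so that $1 \le b < a$. Since $v$ is non-decreasing (as noted just before the lemma), $a \leq v(j) \leq v(j+1)$. Hence $a$ is feasible in the minimization defining $x(j+1)$, and symmetrically $b$ is feasible in the one defining $x(j)$.

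Using optimality of $a$ at $j$ against the feasible competitor $b$, I record
\[
\max(j-a,\ s(a)) \;\leq\; \max(j-b,\ s(b)). \qquad (\ast)
\]
Using the \emph{maximality} of $b$ among the argmins at $j+1$ against the feasible competitor $a > b$, I obtain a \emph{strict} inequality (otherwise $a$ would be a strictly larger element of $\mathrm{Argmin}$ at $j+1$, contradicting $b = \max\mathrm{Argmin}$):
\[
\max(j+1-a,\ s(a)) \;>\; \max(j+1-b,\ s(b)). \qquad (\ast\ast)
\]

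The final step is a short chase of which term attains each max. Since $a>b$, we have $j+1-b > j+1-a$, so the right-hand side of $(\ast\ast)$ is at least $j+1-b > j+1-a$; hence the strictness forces the left-hand side to equal $s(a)$, giving $s(a) > j+1-b$. In particular $s(a) > j-b > j-a$, so $(\ast)$ simplifies to $s(a) \leq \max(j-b, s(b))$, and combined with $s(a) > j-b$ it forces $s(a) \leq s(b)$. But then $\max(j+1-b, s(b)) \geq s(b) \geq s(a) = \max(j+1-a, s(a))$, contradicting $(\ast\ast)$.

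The only real subtlety is justifying the strict inequality $(\ast\ast)$: it crucially uses that $x$ is defined as the \emph{maximum} argmin together with the monotonicity of $v$ (which is what keeps $a$ admissible at $j+1$). The rest is an elementary case analysis on which of $j-\cdot$ or $s(\cdot)$ dominates each $\max$, exploiting the fact that incrementing $j$ by $1$ shifts both width terms uniformly by $1$ and so cannot, by itself, swap their orderings with the fixed values $s(a)$ and $s(b)$.
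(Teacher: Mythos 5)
Your proof is correct and follows essentially the same route as the paper's: a contradiction argument that derives $\max(j-a,s(a)) \leq \max(j-b,s(b))$ from optimality at $j$, the strict reverse inequality at $j+1$ from the \emph{max}-argmin convention together with $v(j)\leq v(j+1)$, and then an elementary analysis of which term attains each maximum. The only difference is cosmetic: you replace the paper's exhaustive four-case analysis with a shorter direct chain of implications.
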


\begin{proof}
By the definition of $x(.)$, for any $j\in [1\ltdots n]$, we have for $j' \in [1\ltdots x(j)-1]$, $\max(j-j',s(j')) \geq \max(j-x(j),s(x(j)))$ and for $j' \in [x(j)+1 \ltdots v(j)]$, $\max(j-j',s(j')) > \max(j-x(j),s(x(j)))$.

We assume that there exists $j \in [1\ltdots n-1]$, such that $x(j+1) < x(j)$. In this case, $x(j+1) \in [1\ltdots x(j)-1]$ and we have $\max(j-x(j+1),s(x(j+1))) \geq \max(j-x(j),s(x(j)))$. As $v(j+1) \geq v(j)$, $x(j) \in [x(j+1)+1\ltdots v(j+1)]$ and thus $\max(j+1-x(j+1),s(x(j+1))) < \max(j+1-x(j),s(x(j)))$. As $x(j+1) < x(j)$, we have $j-x(j+1) > j-x(j)$.
To simplify the proof, we take $A = j-x(j+1)$, $B = s(x(j+1))$, $C= j - x(j)$ and $D = s(x(j))$. Hence, we have $\max(A,B) \geq \max(C,D)$, $\max(A+1,B) < \max(C+1,D)$ and $A > C$. Now we are going to prove that this system admits no solution.
\begin{itemize}
    \item Case where $A = \max(A,B)$ and $C = \max(C,D)$. As $A> C$, we have $A+1 > C+1$ and thus $\max(A+1,B) > \max(C+1,D)$ which is impossible because $\max(A+1,B) < \max(C+1,D)$.
    \item Case where $B = \max(A,B)$ and $C = \max(C,D)$. We can assume that $B>A$ (in the other case, we take $A = \max(A,B)$) and as $A > C$, we have $B > C+1$ and thus $\max(A+1,B) > \max(C+1,D)$ which is impossible because $\max(A+1,B) < \max(C+1,D)$.
    \item Case where $A = \max(A,B)$ and $D = \max(C,D)$. We have $A > D$ and $A > C$, thus $\max(A+1,B) > \max(C+1,D)$ which is impossible because $\max(A+1,B) < \max(C+1,D)$.
    \item Case where $B = \max(A,B)$ and $D = \max(C,D)$. We have $B \geq D$ and $A > C$, thus $\max(A+1,B) \geq \max(C+1,D)$ which is impossible because $\max(A+1,B) < \max(C+1,D)$.
\end{itemize}
\end{proof}

\begin{lemma}
    By initialising $s(1)$ to a threshold $K$, for any $j\in [1\ltdots n]$, we have $s(j) \leq \max(j,K)$.
\end{lemma}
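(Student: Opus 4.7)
The plan is to prove the bound by a straightforward induction on $j$. For the base case $j = 1$ the claim is immediate from the initialisation, since $s(1) = K \leq \max(1, K)$.

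For the inductive step, fix $j \geq 2$ and assume $s(j') \leq \max(j', K)$ for every $j' < j$. The key move is to plug the single candidate $j' = 1$ into recurrence~(\ref{eq:score}); as soon as $1$ lies in the feasible range $[1 \ltdots v(j)]$ of the minimum, this gives
\[
s(j) \;\leq\; \max(j - 1,\, s(1)) \;=\; \max(j - 1,\, K) \;\leq\; \max(j,\, K),
\]
which is exactly the desired bound. No other candidate needs to be analysed, because the recurrence is a minimum and we only need \emph{one} feasible choice to produce an upper bound.

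The one delicate point, and therefore the main obstacle in the plan, is ensuring that $j' = 1$ really is admissible for every $j \in [1 \ltdots n]$, i.e.\ that $v(j) \geq 1$. This is precisely the role of initialising $s(1) = K$: it makes the partial segmentation that closes the first block at column $1$ always feasible, with the threshold $K$ playing the role of its score (equivalently, its ``effect range'' in the subsequent algorithm). Under this convention the induction goes through verbatim. If one preferred to handle the degenerate case $v(j) = 0$ separately, one could instead fall back on the trivial single-block segmentation $\mathtt{MSA}[1 \ltdots m,\, 1 \ltdots j]$, which yields $s(j) \leq j \leq \max(j, K)$ and so covers that case as well. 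Either way, no machinery beyond the induction is needed.
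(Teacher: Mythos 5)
Your proof is correct and takes essentially the same route as the paper's: both are a one-step induction on recurrence~(\ref{eq:score}), the only difference being that you bound the minimum by the single feasible candidate $j'=1$ (so the inductive hypothesis for $j'>1$ is never actually invoked), whereas the paper bounds every term $\max(j-j',s(j'))$ by $\max(j,K)$ using $s(j')\leq\max(j',K)$. Both arguments silently require the range $[1\ltdots v(j)]$ to be non-empty, i.e.\ $v(j)\geq 1$; you at least flag this explicitly, and your fallback bound $s(j)\leq j$ for the degenerate case is a reasonable patch that the paper omits.
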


\begin{proof}
    We are going to show by induction. The base case is obvious because $s(1) = K \leq \max(1,K)$.
    As $s(j)=\min_{j':1\leq j'\leq v(j)} \max(j-j',s(j'))$, by using induction, $s(j) \leq \min_{j':1\leq j'\leq v(j)} \max(j,K) \leq \max(j,K)$
\end{proof}

By using $K = n+1$, we can have that each $s(j)$ is in $O(n)$.

\begin{lemma}\label{le:j:star}
    Given $j* \in [x(j-1)+1\ltdots v(j)]$, we can compute in constant time if
    \[
    j^{\star} = \max Argmin_{j' \in [j^{\star} \ltdots v(j)]} \max(j-j',s(j')).
    \]
\end{lemma}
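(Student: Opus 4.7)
The plan is to reduce the check to a single range-minimum comparison maintained by a sliding-window structure. Write $f(j') := \max(j-j', s(j'))$ for the function being minimised.

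First, by the definition of $\max Argmin$, the target equality is equivalent to $f(j') > f(j^\star)$ for every $j' \in (j^\star \ltdots v(j)]$: membership in the argmin over $[j^\star, v(j)]$ is automatic at $j' = j^\star$, while the \emph{max} qualifier forces the remaining elements to be strictly greater. Second, for any such $j'$ we have $j - j' < j - j^\star \le f(j^\star)$, so the linear term is already strictly below $f(j^\star)$. Hence $f(j') > f(j^\star) \iff s(j') > f(j^\star)$, because $s(j') \le f(j^\star)$ would force $f(j') = \max(j-j', s(j')) \le f(j^\star)$. The whole check therefore reduces to the single inequality
\[
\min_{j' \in (j^\star \ltdots v(j)]} s(j') \; > \; f(j^\star),
\]
which, once the minimum on the left is available, is completed by one comparison with $f(j^\star) = \max(j-j^\star, s(j^\star))$.

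To make the left-hand side readable in $O(1)$, I would maintain a monotonic deque of indices in the window $(j^\star, v(j)]$, keyed by non-decreasing $s$-value from front to back. When $j$ advances (and hence $v(j)$ does), append the new indices to the back, popping any tail whose $s$-value is at least as large; this is amortised $O(1)$. When $j^\star$ advances, which by the preceding monotonicity lemma it does monotonically across the entire run, discard any front index that has fallen out of the window, again amortised $O(1)$. The minimum is then the $s$-value at the front of the deque, read in $O(1)$, and the comparison above finishes the job.

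The main delicacy, and the only non-routine point, is aligning the deque's sliding-window bookkeeping with the order in which the outer algorithm of Section~\ref{sect:recurrence} enumerates $(j, j^\star)$ pairs. The monotonicity $x(j-1) \le x(j)$ established just above, together with the obvious $v(j-1) \le v(j)$, guarantees that both endpoints of $(j^\star, v(j)]$ only move forward in time, which is exactly the invariant that the standard monotonic-deque analysis requires. Modulo this bookkeeping, the per-query cost is worst-case $O(1)$, as claimed.
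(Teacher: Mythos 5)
Your reduction of the check to the single comparison $\min_{j' \in [j^\star+1 .. v(j)]} s(j') > \max(j-j^\star, s(j^\star))$ --- justified by observing that $j - j' < j - j^\star \le f(j^\star)$ kills the linear term for every $j'$ strictly to the right of $j^\star$ --- is exactly the paper's argument (the paper compares $k=\max(j-j^\star,s(j^\star))$ against $s(j^\diamond)$ for $j^\diamond$ minimising $s$ over $[j^\star+1 .. v(j)]$), so on the core logic you and the paper coincide. The only divergence is the machinery for the range minimum: the paper invokes the constant-time semi-dynamic range minimum structure of Cazaux et al.\ as a black box, whereas you build a monotone deque over the sliding window $(j^\star .. v(j)]$, relying on the monotonicity of $x(\cdot)$ and $v(\cdot)$ to ensure both window endpoints only advance. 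Your variant is more self-contained and equally sufficient for the overall $O(n)$ bound, at the cost of weakening the per-check guarantee from worst-case to amortised constant time (which you correctly flag); note also that since $v(j)\le j-1$, every $s(j')$ you push onto the deque is already computed, so the bookkeeping you describe is sound.
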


\begin{proof}
    We need just to compare $k = \max(j-j^{\star},s(j^{\star}))$ and $s(j^{\diamond})$ where $j^{\diamond}$ is in \\$Argmin_{j' \in [j^{\star}+1\ltdots v(j)]} s(j')$. If $k$ is smaller than $s(j^{\diamond})$, $k$ is smaller than all the $s(j')$ with $j' \in [j^{\star}+1\ltdots  v(j)]$ and thus for all $\max(j-j',s(j'))$. Hence we have \\$j^{\star}= \max Argmin_{j' \in [j^{\star} \ltdots v(j)]} \max(j-j',s(j'))$.

    Otherwise, $s(j^{\diamond}) \geq k$ and as $k \geq j-j^{\star}$,  $\max(j-j^{\diamond},s(j^{\diamond})) \geq k$. In this case $j^{\star} \neq \max Argmin_{j' \in [j^{\star}\ltdots  v(j)]} \max(j-j',s(j'))$.
    By using the constant time semi-dynamic range maximum query by Cazaux et al. \cite{CKMN19} on the array $s(.)$, we can obtain in constant time $j^{\diamond}$ and thus check the equality in constant time.
\end{proof}

\begin{theorem}
    We can build all the values $s(j)$ in $O(n)$ time after $O(nm)$ preprocessing.
\end{theorem}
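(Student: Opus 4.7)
The plan is to combine the monotonicity of $x(\cdot)$ established in Lemma~8 with the constant-time check from Lemma~\ref{le:j:star} to obtain an amortized linear-time sweep over the columns. We take the $O(nm)$ preprocessing to have already produced every value $v(j)$ via the preceding theorem, and we plug the semi-dynamic range-minimum structure of Cazaux et al.~\cite{CKMN19} onto the array $s(\cdot)$, growing it by one entry after each outer iteration so that the $O(1)$-time range-minimum queries required by Lemma~\ref{le:j:star} are always available.

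The algorithm processes $j=1,2,\ldots,n$ left to right while maintaining a single cursor $j^{\star}$ that encodes $x(j-1)$ upon entering iteration $j$, together with the value $s(x(j-1))$. For each $j$, we attempt to advance $j^{\star}$ from its current position forward through $[x(j-1)+1\ltdots v(j)]$: at every candidate position we invoke Lemma~\ref{le:j:star}, which in constant time decides whether $j^{\star}=\max Argmin_{j'\in[j^{\star}\ltdots v(j)]}\max(j-j',s(j'))$ by means of a single range-minimum query on $s$ over $[j^{\star}+1\ltdots v(j)]$. Once the check succeeds (or the interval is exhausted, in which case we keep $j^{\star}=x(j-1)$), we set $x(j):=j^{\star}$, read off $s(j)=\max(j-x(j),s(x(j)))$, and append $s(j)$ to the semi-dynamic structure.

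The key to the running time is amortization. Each elementary step of advancing the cursor costs $O(1)$ thanks to Lemma~\ref{le:j:star} and the constant-time range-minimum queries, and each append is $O(1)$ amortized in the semi-dynamic structure. Because Lemma~8 guarantees $x(j)\geq x(j-1)$, the cursor $j^{\star}$ never moves backward across the whole sweep; since it lives in $[1\ltdots n]$, the sum of the advancements over all $n$ iterations is at most $n$. Together with the $O(n)$ bookkeeping of the outer loop, this yields the claimed $O(n)$ total time on top of the $O(nm)$ preprocessing.

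The main obstacle is arguing soundness of the pointer-only-moves-forward strategy, that is, that it is legitimate to ignore positions in $[1\ltdots x(j-1)-1]$ when computing $x(j)$. This is precisely what Lemma~8 delivers, so the correctness of the forward scan reduces to that lemma together with Lemma~\ref{le:j:star} certifying the stopping position. The remaining technicalities are the initialization (using the threshold $K$ for $s(1)$ as in the preceding lemma, which also controls the range of values stored in the semi-dynamic structure via $s(j)\leq\max(j,K)$), and the boundary case $v(j)=0$, where no valid $j^{\star}$ exists and $s(j)$ is set according to the same fallback.
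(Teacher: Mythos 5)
Your proposal is correct and follows essentially the same route as the paper's own proof: a left-to-right sweep maintaining the monotone cursor $x(j)$, advancing it via the constant-time test of Lemma~\ref{le:j:star}, and appending each $s(j)$ to the semi-dynamic range-query structure of Cazaux et al. In fact you spell out the amortization argument (total cursor advancement bounded by $n$ via the monotonicity lemma) and the boundary cases more explicitly than the paper does.
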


\begin{proof}
    After preprocessing in $O(mn)$ to compute all the values $v(j)$, we can build all the values $s(j)$ by increasing $j$ and computing the values $x(j)$. For any $j \in [1\ltdots n]$, we check any $j'$ from $x(j-1)$ to $v(j)$ with the equality of Lemma~\ref{le:j:star} until one is true. With the property of the equality, we know that this $j'$ is $x(j)$. After that we compute $s(j)$ which corresponds to the value $\max(j-x(j),s(x(j)))$, we add $s(j)$ to the constant time semi-dynamic range maximum query and we compute $s(j+1)$.
\end{proof}

\section{Succinct index for segment-free founder block graphs \label{sect:compressedindexing}}

Recall the indexing solutions of Sect.~\ref{sect:repeat-freeness} and the definitions from Sect.~\ref{sect:definitions}. 

We now show that explicit tries and Aho-Corasick automaton can be replaced by some auxiliary data structures associated with the Burrows-Wheeler transformation of the concatenation $C=\prod_{i \in \{1,2,\ldots,b\}} \prod_{v \in V^i, (v,w) \in E} \ell(v)\ell(w)\mathbf{0}$. 

Consider interval $\mathtt{SA}[i\ltdots k]$ in the suffix array of $C$ corresponding to suffixes having $\ell(v)$ as prefix for some $v \in V$. From the segment repeat-free property it follows that this interval can be split into two subintervals, $\mathtt{SA}[i\ltdots j]$ and $\mathtt{SA}[j+1\ltdots k]$, such that suffixes in $\mathtt{SA}[i\ltdots j]$ start with $\ell(v)\mathbf{0}$ and suffixes in $\mathtt{SA}[j+1\ltdots k]$ start with $\ell(v)\ell(w)$, where $(v,w) \in E$. Moreover, BWT$[i\ltdots j]$ equals multiset $\{\ell(u)[|\ell(u)|-1] \mid (u,v) \in E\}$ \emph{sorted in lexicographic order}. This follows by considering the lexicographic order of suffixes $\ell(u)[|\ell(u)|-1]\ell(v)\mathbf{0}\ldots$ for $(u,v)\in E$. That is, BWT$[i\ltdots j]$ (interpreted as a set) represents the children of the root of the trie $\mathcal{R}(v)$ considered in Sect.~\ref{sect:repeat-freeness}.

We are now ready to present the search algorithm that uses only the BWT of $C$ and some small auxiliary data structures. We associate two bitvectors $B$ and $E$ to the BWT of $C$ as follows. We set $B[i]=1$ and $E[k]=1$ iff $\mathtt{SA}[i\ltdots k]$ is maximal interval with all suffixes starting with $\ell(v)$ for some $v\in V$.

Consider the backward search with query $Q[1\ltdots q]$. Let $\mathtt{SA}[j'\ltdots k']$ be the interval after matching the shortest suffix $Q[q'\ltdots q]$ such that BWT$[j']=\mathbf{0}$. Let $i=\mathtt{select}(B,\mathtt{rank}(B,j'))$ and $k=\mathtt{select}(E,\mathtt{rank}(B,j'))$.
If $i\leq j'$ and $k'\leq k$, index $j'$ lies inside an interval $\mathtt{SA}[i\ltdots k]$ where all suffixes start with $\ell(v)$ for some $v$. We modify the range into $\mathtt{SA}[i\ltdots k]$, and continue with the backward step on $Q[q'-1]$. We check the same condition in each step and expand the interval if the condition is met. Let us call this procedure \emph{expanded backward search}.

We can now strictly improve Theorems~\ref{thm:blockrepeatfree}~and~\ref{thm:blockrepeatfreeGDstring} as follows.

\begin{theorem}
Let $G=(V,E)$ be a segment repeat-free founder block graph (or a segment repeat-free GD string) with blocks $V^1, V^2, \ldots, V^b$ such that $V=V^1 \cup V^2 \cup \cdots \cup V^b$. We can preprocess an index structure for $G$ occupying $O(W|E|\log\sigma)$ bits in $O(W|E|)$ time, where $\{1,\ldots,\sigma\}$ is the alphabet for node labels and $W=\max_{v \in V} \ell(v)$. Given a query string $Q[1\ltdots q] \in \{1,\ldots,\sigma\}^q$, we can use expanded backward search with the index structure to find out if $Q$ occurs in $G$. This query takes $O(|Q|)$ time. 
\label{thm:blockrepeatfreeBWTindex}
\end{theorem}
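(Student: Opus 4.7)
The plan is to instantiate the data structure sketched just before the statement — the bidirectional BWT index of
$C=\prod_{i\in\{1,\ldots,b\}}\prod_{v\in V^i,(v,w)\in E}\ell(v)\ell(w)\mathbf{0}$
together with the two bitvectors $B$ and $E$ — and then verify that expanded backward search implements the same search semantics as the trie-based procedure of Theorem~\ref{thm:blockrepeatfree} within the claimed bounds. Since $|C|\leq(2W+1)|E|$, the fully-functional bidirectional BWT index \cite{BC19} of $C$ can be built in $O(W|E|)$ time and stored in $O(W|E|\log\sigma)$ bits, and the $o(|C|)$-bit rank/select overhead on $B,E$ \cite{Jac89} fits within the same budget.

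To populate $B$ and $E$ I first observe that segment repeat-freeness forces the label set $\{\ell(v):v\in V\}$ to be prefix-free. Two labels in the same block have the same length and are distinct by the block-graph axioms; and if $\ell(u)$ were a proper prefix of $\ell(v)$ with $u\in V^i$ and $v\in V^k$ for $i\neq k$, then $\ell(u)$ would appear in every MSA row supporting $v$ at column $j^{(k)}\neq j^{(i)}$, contradicting Lemma~\ref{lemma:charaterization}. Hence for each $v\in V$ the suffixes of $C$ prefixed by $\ell(v)$ form a maximal, pairwise disjoint interval $\mathtt{SA}[i_v\ltdots k_v]$, which I locate via $|\ell(v)|$ left-extensions of the index, setting $B[i_v]=E[k_v]=1$. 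The total cost is $\sum_v O(|\ell(v)|)=O(N)=O(W|E|)$.

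For correctness of the expanded backward search, the invariant I maintain is that after processing $Q[q'\ltdots q]$ the current interval contains exactly the SA positions of $C$ whose suffixes begin with a string ending in $Q[q'\ltdots q]$ that corresponds to the label of some walk of $G$. A plain backward step preserves this invariant whenever the current match is contained in a single edge unit $\ell(u)\ell(v)$. The delicate moment is when the match has to cross from $\ell(v)$ into a predecessor $\ell(u')$; this is detected by the symbol $\mathbf{0}$ at position $j'$ in the current BWT interval, which forces $Q[q'\ltdots q]$ to be a prefix of some edge unit $\ell(v)\ell(w)\mathbf{0}$, and hence of $\ell(v)$. By prefix-freeness there is a unique such $v$, identified by the test $i\leq j'\wedge k'\leq k$ via $B$ and $E$. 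Expanding to $[i_v\ltdots k_v]$ then adds precisely the occurrences of $\ell(v)$ that appear as the second half $\ell(u')\ell(v)$ of edge units, and the subsequent backward step on $Q[q'-1]$ correctly continues into the appropriate predecessor labels. An induction on the number of edge crossings performed while scanning $Q$ right-to-left shows that every occurrence of $Q$ in a path of $G$ is captured and that no spurious match is introduced.

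For the time bound, each left-extension runs in $O(1)$ in the fully-functional bidirectional BWT index \cite{BC19}, and each expansion check costs one rank and two select operations on $B,E$, also $O(1)$, so the whole query runs in $O(|Q|)$ time. The main obstacle is the correctness of the expansion step: once prefix-freeness of $\{\ell(v)\}$ is established, the disjointness of the label intervals makes the expansion unambiguous, and segment repeat-freeness guarantees that expanding to $[i_v\ltdots k_v]$ faithfully enumerates the ways the current match can be continued backward along in-edges of $v$ in $G$ — which is precisely what the explicit tries $\mathcal{R}(v)$ did in Theorem~\ref{thm:blockrepeatfree}, now encoded implicitly by the BWT together with $B$ and $E$.
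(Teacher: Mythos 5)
Your proposal is correct and follows essentially the same route as the paper's (sketched) proof: build the bidirectional BWT index of $C$ together with the bitvectors $B$ and $E$, and argue that the expansion step faithfully simulates the implicit tries $\mathcal{R}(v)$ from the proof of Theorem~\ref{thm:blockrepeatfree}. The one point you develop more explicitly than the paper is the prefix-freeness of the label set $\{\ell(v) : v\in V\}$, which substantiates the paper's unproved assertion that no search interval can contain two distinct node labels.
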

\begin{proof} (sketch)
As we  expand the search interval in BWT, it is evident that we still find all occurrences for short patterns that span at most two nodes, like in the proof of Theorem~\ref{thm:blockrepeatfree}. We need to show that a) the expansions do not yield spurious occurrences for such short patterns and b) the expansions yield exactly the occurrences for long patterns that we earlier found with the Aho-Corasick and tries approach. 

In case b), notice that after an expansion step we are indeed in an interval $\mathtt{SA}[i\ltdots k]$ where all suffixes match $\ell(v)$ and thus corresponds to a node $v \in V$. The suffix of the query processed before reaching interval $\mathtt{SA}[i\ltdots k]$ must be at least of length $|\ell(v)|$. That is, to mimic Aho-Corasick approach, we should continue with the trie $\mathcal{R}(v)$. This is identical to taking backward step from BWT$[i\ltdots k]$, and continuing therein to follow the rest of this implicit trie. 

To conclude case b), we still need to show that we reach all the same nodes as when using Aho-Corasick, and that the search to other direction with $\mathcal{L}(v)$ can be avoided. These follow from case a), as we see.

In case a), before doing the first expansion, the search is identical to the original algorithm in the proof of Theorem~\ref{thm:blockrepeatfree}. After the expansion, all matches to be found are those of case b). That is, no spurious matches are reported. Finally, no search interval can include two distinct node labels, so the search reaches the only relevant node label, where the Aho-Corasick and trie search simulation takes place. We reach all such nodes that can yield a full match for the query.
\end{proof}

\section{Gaps in multiple alignment\label{sect:gaps}}

We have so far assumed that our input is a gapless multiple alignment. Let us now consider how to extend the results to the general case.
The idea is that gaps are only used in the segmentation algorithm to define the valid ranges, and that is the only place where special attention needs to be taken; elsewhere, whenever a substring from MSA rows is read, gaps are treated as empty strings. That is, A-GC-TA- becomes AGCTA.  

It turns out that allowing gaps in MSA indeed makes the computation of valid ranges more difficult. To see this,
consider a segment in MSA containing strings:
\begin{verbatim}
-AC-CGATC-
-A-CCGATCC
AAC-CGATC-
AAC-CGA-C-
\end{verbatim}
After gaps are removed this segment becomes:
\begin{verbatim}
ACCGATC
ACCGATCC
AACCGATC
AACCGAC
\end{verbatim}
Without even seeing the rest of the MSA, one can see that this is not a valid block, as the first string is a prefix of the second. With gapless MSAs this was not possible and the algorithm in Sect.~\ref{sect:preprocessing} exploited this fact. Modifying that algorithm to handle gaps properly is possible, but non-trivial. 

We leave this extension to future work; see however Sections \ref{sect:experiments} and \ref{sect:discussion} for some further insights. 

Despite the preprocessing for the segmentation is affected by gaps in MSA, once such valid segmentation is found, the rest of the results stay unaffected.
All the proposed definitions extend to this interpretation by just omitting gap symbols when reading the strings. The consequence for founder block graph is that strings inside a block can be variable length. Interestingly, with this interpretation Theorem~\ref{thm:blockrepeatfreeGDstring} can be expressed with GD strings replaced by \emph{elastic strings} \cite{bernardini_et_al2019elastic}. 

\section{Implementation and experiments \label{sect:experiments}}

We implemented several methods proposed in this paper. The implementation is available at \url{https://github.com/algbio/founderblockgraphs}. Some preliminary experiments are reported below. 

\subsection{Construction}

We implemented the founder block graph construction algorithm of Sect.~\ref{sect:recurrence} with the faster preprocessing routine of Sect.~\ref{sect:preprocessing}; in place of fully-functional bidirectional BWT index, we used similar routines implemented in compressed suffix trees of SDSL library \cite{sdsl}. 

To test the implementation we downloaded 1484 strains of SARS-CoV-2 strains stored in NCBI database.\footnote{https://www.ncbi.nlm.nih.gov/, accessed 24.04.2020.} 
We created a multiple sequence alignment of the strains using \emph{ViralMSA}\cite{viralMSA}. We filtered out rows that contained gaps or N's. 
We were left with a multiple sequence alignment of 410 rows and 29811 columns. Our algorithm took 58 seconds to produce the optimal segmentation on Intel(R) Xeon(R) CPU, E5-2690, v4, 2.60GHz. There were 3352 segments in the segmentation, the maximum segment length was 12, and the maximum number of founder segments in a block was 12. The founder block graph had 3900 nodes and 4440 edges. The total length of node labels was 34968. The graph size is thus less than $1\%$ of the MSA size.

We also implemented support to construct founder block graphs for general MSA's that contain gaps. The modification to the gapless case was that nested BWT intervals needed to be detected. We stopped left-extension as soon as the BWT intervals contained no other repeats than those caused by nestedness. This left valid range undefined on such columns, but for the rest the valid range can still be computed correctly (undefined values could be postprocessed using matching statistics \cite{BCD18} on all pairs of prefixes preceding suffixes causing nested intervals). Initial experiments show very similar behaviour to the gapless case, but we defer further experiments until the implementation is mature enough.

\subsection{Indexing}

We implemented the succinct indexing approach of Sect.~\ref{sect:compressedindexing}. On the founder block graph of the previous experiment, the index occupied 87 KB. This is $3\%$ of the original input size, as the encoding of the input MSA with 2 bits per nucleotide takes 2984 KB. 

Figure~\ref{fig:plot} shows an experiment with indexes built on different size samples of the MSA rows, and with querying patterns of varying length sampled from the same rows. As can be seen, the query times are not affected by the size of the MSA samples (showing independence of the input MSA), but only on their length (showing linear dependency on the query length).  

\begin{figure}
    \centering
    \includegraphics[width=0.9\textwidth]{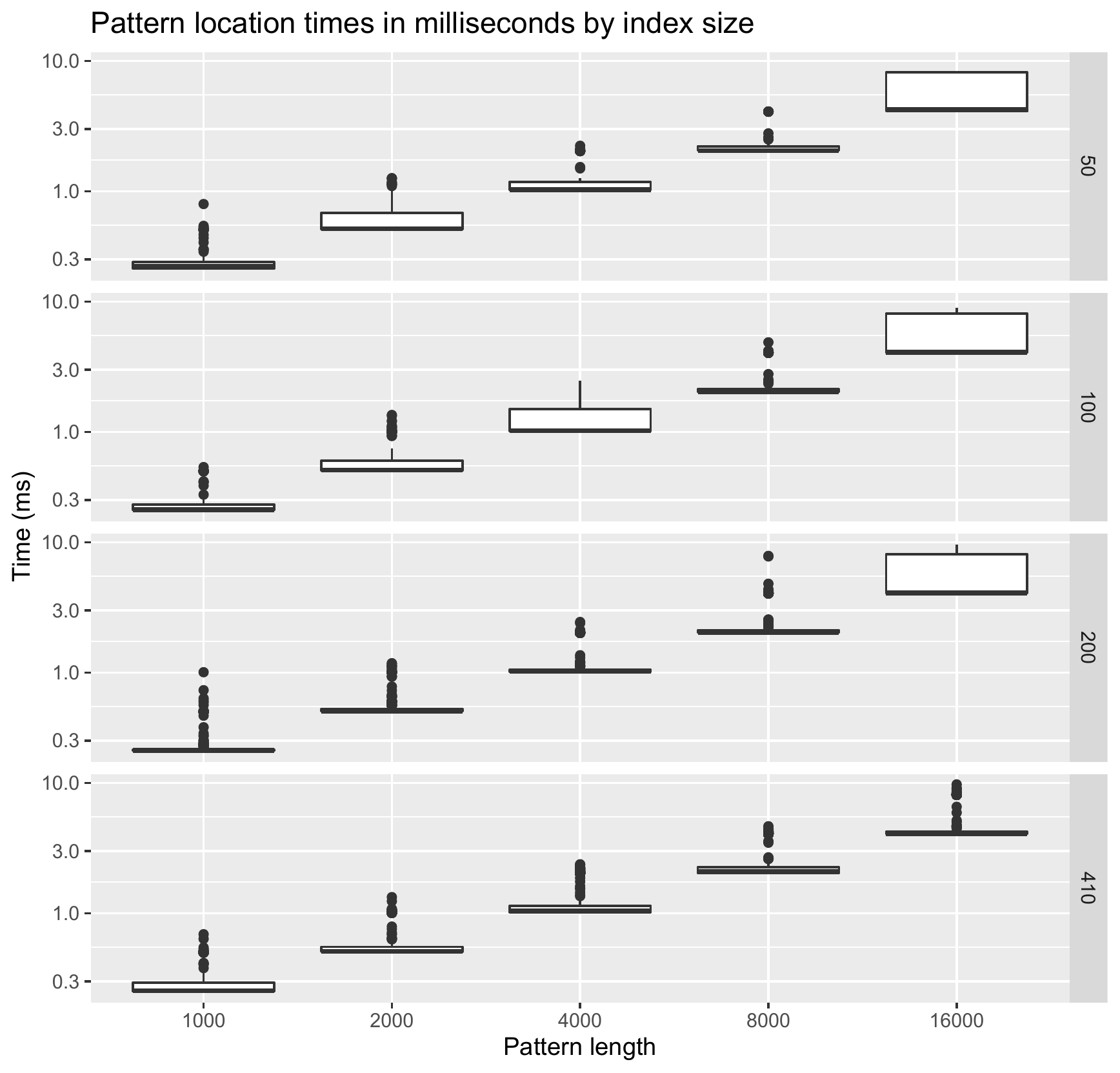}
    \caption{Running time of querying patterns from the founder block graph. The sample sizes (for MSA row subsets) are shown on the right-hand side of each plot. The plots show averages and distribution over 10 repeats of each search, where one search consist of a set of query patterns of given length randomly sampled from the respective MSA row subset. The pattern set sample size (10,20,30,40, respectively) grows by the MSA sample size, but the reported numbers are normalized so that the query time (milliseconds) is per pattern. This experiment was run on Intel(R) Core(TM) i5-4308U CPU, 2.80GHz.\label{fig:plot}}
\end{figure}

\section{Discussion\label{sect:discussion}}

One characterization of our solution is that we compact those vertical repeats in MSA that are not horizontal repeats. This can be seen as positional extension of variable order de Bruijn graphs. Also, our solution is parameter-free unlike de Bruijn approaches that always need some threshold $k$, even in the variable order case. 

The founder block graph concept could also be generalized so that it is not directly induced from a segmentation. One could consider cyclic graphs having the same segment repeat-free property. This could be interesting direction in defining parameter-free de Bruijn graphs.

Future work include a proper extension of the approach to general MSA's containing gaps. Our implementation already contains support for such MSA's, but the theory framework still requires some more work to show that such extension can be done without any effect on the running time. 
 
On the experimental side, there is still much more work to be done. So far, we have not optimized any of the algorithms for multithreading nor for space usage. For example, one could use BWT indexes engineered for highly repetitive text collections to build the founder block graphs in space proportional of the compressed MSA. Such optimizations are essential for applying the approach on e.g. human genome data. Past experience on similar solutions \cite{NCKM19} indicate that our approach should easily be applicable to much larger datasets than those we covered in our preliminary experiments.

Finally, this paper only scratches the surface of a new family of pangenome representations. There are myriad of options how to optimize among the valid segmentations \cite{NCKM19,CKMN19}, e.g. by optimizing the number of founder segments \cite{NCKM19} or controlling the over-expressiveness of the graph, rather than minimizing the maximum block size as studied here.


\begin{thebibliography}{10}

\bibitem{AC75}
Alfred~V. Aho and Margaret~J. Corasick.
\newblock Efficient string matching: An aid to bibliographic search.
\newblock {\em Commun. {ACM}}, 18(6):333--340, 1975.

\bibitem{Alaetal20}
Jarno Alanko, Giovanna D'Agostino, Alberto Policriti, and Nicola Prezza.
\newblock Regular languages meet prefix sorting.
\newblock In Shuchi Chawla, editor, {\em Proceedings of the 2020 {ACM-SIAM}
  Symposium on Discrete Algorithms, {SODA} 2020, Salt Lake City, UT, USA,
  January 5-8, 2020}, pages 911--930. {SIAM}, 2020.

\bibitem{alzamel_et_al18degenerate}
Mai Alzamel, Lorraine A.~K. Ayad, Giulia Bernardini, Roberto Grossi, Costas~S.
  Iliopoulos, Nadia Pisanti, Solon~P. Pissis, and Giovanna Rosone.
\newblock {Degenerate String Comparison and Applications}.
\newblock In Laxmi Parida and Esko Ukkonen, editors, {\em 18th International
  Workshop on Algorithms in Bioinformatics (WABI 2018)}, volume 113 of {\em
  Leibniz International Proceedings in Informatics (LIPIcs)}, pages
  21:1--21:14, Dagstuhl, Germany, 2018. Schloss Dagstuhl--Leibniz-Zentrum fuer
  Informatik.
\newblock URL: \url{http://drops.dagstuhl.de/opus/volltexte/2018/9323}, \href
  {http://dx.doi.org/10.4230/LIPIcs.WABI.2018.21}
  {\path{doi:10.4230/LIPIcs.WABI.2018.21}}.

\bibitem{Alz18}
Mai Alzamel, Lorraine A.~K. Ayad, Giulia Bernardini, Roberto Grossi, Costas~S.
  Iliopoulos, Nadia Pisanti, Solon~P. Pissis, and Giovanna Rosone.
\newblock Degenerate string comparison and applications.
\newblock In Laxmi Parida and Esko Ukkonen, editors, {\em 18th International
  Workshop on Algorithms in Bioinformatics, {WABI} 2018, August 20-22, 2018,
  Helsinki, Finland}, volume 113 of {\em LIPIcs}, pages 21:1--21:14. Schloss
  Dagstuhl - Leibniz-Zentrum f{\"{u}}r Informatik, 2018.

\bibitem{BC19}
Djamal Belazzougui and Fabio Cunial.
\newblock Fully-functional bidirectional burrows-wheeler indexes and
  infinite-order de bruijn graphs.
\newblock In Nadia Pisanti and Solon~P. Pissis, editors, {\em 30th Annual
  Symposium on Combinatorial Pattern Matching, {CPM} 2019, June 18-20, 2019,
  Pisa, Italy}, volume 128 of {\em LIPIcs}, pages 10:1--10:15. Schloss Dagstuhl
  - Leibniz-Zentrum f{\"{u}}r Informatik, 2019.

\bibitem{BCD18}
Djamal Belazzougui, Fabio Cunial, and Olgert Denas.
\newblock Fast matching statistics in small space.
\newblock In Gianlorenzo D'Angelo, editor, {\em 17th International Symposium on
  Experimental Algorithms, {SEA} 2018, June 27-29, 2018, L'Aquila, Italy},
  volume 103 of {\em LIPIcs}, pages 17:1--17:14. Schloss Dagstuhl -
  Leibniz-Zentrum f{\"{u}}r Informatik, 2018.
\newblock URL: \url{https://doi.org/10.4230/LIPIcs.SEA.2018.17}, \href
  {http://dx.doi.org/10.4230/LIPIcs.SEA.2018.17}
  {\path{doi:10.4230/LIPIcs.SEA.2018.17}}.

\bibitem{BCKM20}
Djamal Belazzougui, Fabio Cunial, Juha K\"{a}rkk\"{a}inen, and Veli
  M\"{a}kinen.
\newblock Linear-time string indexing and analysis in small space.
\newblock {\em ACM Trans. Algorithms}, 16(2), March 2020.
\newblock URL: \url{https://doi.org/10.1145/3381417}, \href
  {http://dx.doi.org/10.1145/3381417} {\path{doi:10.1145/3381417}}.

\bibitem{Beletal18}
Djamal Belazzougui, Travis Gagie, Veli M{\"{a}}kinen, Marco Previtali, and
  Simon~J. Puglisi.
\newblock Bidirectional variable-order de bruijn graphs.
\newblock {\em Int. J. Found. Comput. Sci.}, 29(8):1279--1295, 2018.
\newblock URL: \url{https://doi.org/10.1142/S0129054118430037}, \href
  {http://dx.doi.org/10.1142/S0129054118430037}
  {\path{doi:10.1142/S0129054118430037}}.

\bibitem{bernardini_et_al2019elastic}
Giulia Bernardini, Pawel Gawrychowski, Nadia Pisanti, Solon~P. Pissis, and
  Giovanna Rosone.
\newblock {Even Faster Elastic-Degenerate String Matching via Fast Matrix
  Multiplication}.
\newblock In Christel Baier, Ioannis Chatzigiannakis, Paola Flocchini, and
  Stefano Leonardi, editors, {\em 46th International Colloquium on Automata,
  Languages, and Programming (ICALP 2019)}, volume 132 of {\em Leibniz
  International Proceedings in Informatics (LIPIcs)}, pages 21:1--21:15,
  Dagstuhl, Germany, 2019. Schloss Dagstuhl--Leibniz-Zentrum fuer Informatik.
\newblock URL: \url{http://drops.dagstuhl.de/opus/volltexte/2019/10597}, \href
  {http://dx.doi.org/10.4230/LIPIcs.ICALP.2019.21}
  {\path{doi:10.4230/LIPIcs.ICALP.2019.21}}.

\bibitem{Bouetal15}
Christina Boucher, Alexander Bowe, Travis Gagie, Simon~J. Puglisi, and Kunihiko
  Sadakane.
\newblock Variable-order de bruijn graphs.
\newblock In Ali Bilgin, Michael~W. Marcellin, Joan Serra{-}Sagrist{\`{a}}, and
  James~A. Storer, editors, {\em 2015 Data Compression Conference, {DCC} 2015,
  Snowbird, UT, USA, April 7-9, 2015}, pages 383--392. {IEEE}, 2015.
\newblock URL: \url{https://doi.org/10.1109/DCC.2015.70}, \href
  {http://dx.doi.org/10.1109/DCC.2015.70} {\path{doi:10.1109/DCC.2015.70}}.

\bibitem{BOSS12}
Alexander Bowe, Taku Onodera, Kunihiko Sadakane, and Tetsuo Shibuya.
\newblock Succinct de bruijn graphs.
\newblock In Benjamin~J. Raphael and Jijun Tang, editors, {\em Algorithms in
  Bioinformatics - 12th International Workshop, {WABI} 2012, Ljubljana,
  Slovenia, September 10-12, 2012. Proceedings}, volume 7534 of {\em Lecture
  Notes in Computer Science}, pages 225--235. Springer, 2012.
\newblock URL: \url{https://doi.org/10.1007/978-3-642-33122-0\_18}, \href
  {http://dx.doi.org/10.1007/978-3-642-33122-0\_18}
  {\path{doi:10.1007/978-3-642-33122-0\_18}}.

\bibitem{BW94}
M.~Burrows and D.~Wheeler.
\newblock A block-sorting lossless data compression algorithm.
\newblock Technical Report 124, Digital Equipment Corporation, 1994.

\bibitem{CKMN19}
Bastien Cazaux, Dmitry Kosolobov, Veli M{\"{a}}kinen, and Tuukka Norri.
\newblock Linear time maximum segmentation problems in column stream model.
\newblock In Nieves~R. Brisaboa and Simon~J. Puglisi, editors, {\em String
  Processing and Information Retrieval - 26th International Symposium, {SPIRE}
  2019, Segovia, Spain, October 7-9, 2019, Proceedings}, volume 11811 of {\em
  Lecture Notes in Computer Science}, pages 322--336. Springer, 2019.

\bibitem{Bri59}
Rene De~La~Briandais.
\newblock File searching using variable length keys.
\newblock In {\em Papers Presented at the the March 3-5, 1959, Western Joint
  Computer Conference}, IRE-AIEE-ACM â€™59 (Western), page 295â€“298, New York,
  NY, USA, 1959. Association for Computing Machinery.
\newblock URL: \url{https://doi.org/10.1145/1457838.1457895}, \href
  {http://dx.doi.org/10.1145/1457838.1457895}
  {\path{doi:10.1145/1457838.1457895}}.

\bibitem{EGMT19}
Massimo Equi, Roberto Grossi, Veli M{\"{a}}kinen, and Alexandru~I. Tomescu.
\newblock On the complexity of string matching for graphs.
\newblock In Christel Baier, Ioannis Chatzigiannakis, Paola Flocchini, and
  Stefano Leonardi, editors, {\em 46th International Colloquium on Automata,
  Languages, and Programming, {ICALP} 2019, July 9-12, 2019, Patras, Greece},
  volume 132 of {\em LIPIcs}, pages 55:1--55:15. Schloss Dagstuhl -
  Leibniz-Zentrum f{\"{u}}r Informatik, 2019.

\bibitem{EMT20}
Massimo Equi, Veli M{\"{a}}kinen, and Alexandru~I. Tomescu.
\newblock Graphs cannot be indexed in polynomial time for sub-quadratic time
  string matching, unless seth fails, 2020.
\newblock \href {http://arxiv.org/abs/2002.00629} {\path{arXiv:2002.00629}}.

\bibitem{GBT84}
Harold~N. Gabow, Jon~Louis Bentley, and Robert~Endre Tarjan.
\newblock Scaling and related techniques for geometry problems.
\newblock In Richard~A. DeMillo, editor, {\em Proceedings of the 16th Annual
  {ACM} Symposium on Theory of Computing, April 30 - May 2, 1984, Washington,
  DC, {USA}}, pages 135--143. {ACM}, 1984.
\newblock URL: \url{https://doi.org/10.1145/800057.808675}, \href
  {http://dx.doi.org/10.1145/800057.808675} {\path{doi:10.1145/800057.808675}}.

\bibitem{GMS17}
Travis Gagie, Giovanni Manzini, and Jouni Sir{\'{e}}n.
\newblock Wheeler graphs: {A} framework for bwt-based data structures.
\newblock {\em Theor. Comput. Sci.}, 698:67--78, 2017.

\bibitem{Garetal18}
Erik Garrison, Jouni SirÃ©n, Adam Novak, Glenn Hickey, Jordan Eizenga, Eric
  Dawson, William Jones, Shilpa Garg, Charles Markello, Michael Lin, and
  Benedict Paten.
\newblock Variation graph toolkit improves read mapping by representing genetic
  variation in the reference.
\newblock {\em Nature Biotechnology}, 36, 08 2018.
\newblock \href {http://dx.doi.org/10.1038/nbt.4227}
  {\path{doi:10.1038/nbt.4227}}.

\bibitem{GT19}
Daniel Gibney and Sharma~V. Thankachan.
\newblock On the hardness and inapproximability of recognizing wheeler graphs.
\newblock In Michael~A. Bender, Ola Svensson, and Grzegorz Herman, editors,
  {\em 27th Annual European Symposium on Algorithms, {ESA} 2019, September
  9-11, 2019, Munich/Garching, Germany}, volume 144 of {\em LIPIcs}, pages
  51:1--51:16. Schloss Dagstuhl - Leibniz-Zentrum f{\"{u}}r Informatik, 2019.

\bibitem{sdsl}
Simon Gog.
\newblock Sdsl -- succinct data structure library, 2020.
\newblock https://github.com/simongog/sdsl-lite.

\bibitem{HPB13}
Lin Huang, Victoria Popic, and Serafim Batzoglou.
\newblock Short read alignment with populations of genomes.
\newblock {\em Bioinformatics}, 29(13):361--370, 2013.

\bibitem{Jac89}
G.~Jacobson.
\newblock Space-efficient static trees and graphs.
\newblock In {\em Proc. FOCS}, pages 549--554, 1989.

\bibitem{Iqp16}
Sorina Maciuca, Carlos del Ojo~Elias, Gil McVean, and Zamin Iqbal.
\newblock A natural encoding of genetic variation in a {B}urrows-{W}heeler
  transform to enable mapping and genome inference.
\newblock In {\em Algorithms in Bioinformatics - 16th International Workshop,
  {WABI} 2016, Aarhus, Denmark, August 22-24, 2016. Proceedings}, volume 9838
  of {\em Lecture Notes in Computer Science}, pages 222--233. Springer, 2016.

\bibitem{MM93}
Udi Manber and Eugene~W. Myers.
\newblock Suffix arrays: {A} new method for on-line string searches.
\newblock {\em {SIAM} J. Comput.}, 22(5):935--948, 1993.
\newblock URL: \url{https://doi.org/10.1137/0222058}, \href
  {http://dx.doi.org/10.1137/0222058} {\path{doi:10.1137/0222058}}.

\bibitem{marschall2016computational}
Tobias Marschall, Manja Marz, Thomas Abeel, Louis Dijkstra, Bas~E Dutilh, Ali
  Ghaffaari, Paul Kersey, Wigard Kloosterman, Veli M{\"a}kinen, Adam Novak,
  et~al.
\newblock Computational pan-genomics: status, promises and challenges.
\newblock {\em BioRxiv}, page 043430, 2016.

\bibitem{viralMSA}
Niema Moshiri.
\newblock Viralmsa: Massively scalable reference-guided multiple sequence
  alignment of viral genomes.
\newblock {\em bioRxiv}, 2020.
\newblock \href {http://dx.doi.org/10.1101/2020.04.20.052068}
  {\path{doi:10.1101/2020.04.20.052068}}.

\bibitem{NCKM19}
Tuukka Norri, Bastien Cazaux, Dmitry Kosolobov, and Veli M{\"{a}}kinen.
\newblock Linear time minimum segmentation enables scalable founder
  reconstruction.
\newblock {\em Algorithms Mol. Biol.}, 14(1):12:1--12:15, 2019.

\bibitem{RU07}
Pasi Rastas and Esko Ukkonen.
\newblock Haplotype inference via hierarchical genotype parsing.
\newblock In {\em Algorithms in Bioinformatics, 7th International Workshop,
  {WABI} 2007, Philadelphia, PA, USA, September 8-9, 2007, Proceedings}, pages
  85--97, 2007.

\bibitem{SOG12}
Thomas Schnattinger, Enno Ohlebusch, and Simon Gog.
\newblock Bidirectional search in a string with wavelet trees and bidirectional
  matching statistics.
\newblock {\em Inf. Comput.}, 213:13--22, 2012.

\bibitem{Siren18}
Jouni Sir\'en, Erik Garrison, Adam~M. Novak, Benedict Paten, and Richard
  Durbin.
\newblock Haplotype-aware graph indexes.
\newblock {\em arXiv preprint arXiv:1805.03834}, 2018.

\bibitem{SVM14}
Jouni Sir{\'e}n, Niko V{\"a}lim{\"a}ki, and Veli M{\"a}kinen.
\newblock Indexing graphs for path queries with applications in genome
  research.
\newblock {\em IEEE/ACM Transactions on Computational Biology and
  Bioinformatics}, 11(2):375--388, 2014.

\bibitem{thachuk2012indexing}
Chris Thachuk.
\newblock Indexing hypertext.
\newblock {\em Journal of Discrete Algorithms}, 18:113--122, 2012.

\bibitem{Ukkonen02}
Esko Ukkonen.
\newblock Finding founder sequences from a set of recombinants.
\newblock In {\em Algorithms in Bioinformatics, Second International Workshop,
  {WABI} 2002, Rome, Italy, September 17-21, 2002, Proceedings}, pages
  277--286, 2002.

\end{thebibliography}
\end{document}